\documentclass[aps,amsmath,amssymb,twocolumn,prl,superscriptaddress]{revtex4-2}

\usepackage{graphicx}
\usepackage{dcolumn}
\usepackage{bm}
\usepackage{epsfig,hyperref,amsthm}
\usepackage{mathtools}
\usepackage{color}
\usepackage{colordvi}
\usepackage[dvipsnames]{xcolor}
\usepackage{relsize}
\usepackage{accents}
\usepackage{wasysym}  
\usepackage{xypic}
\usepackage{times}
\usepackage{newtxmath}
\usepackage{amsfonts}

\hypersetup{
	colorlinks=true,
	linkcolor=CitingColor,
	citecolor=CitingColor,
	urlcolor=CitingColor
}

\DeclareMathAlphabet\mathbfcal{OMS}{cmsy}{b}{n}
\newcommand{\ket}[1]{\ensuremath{|#1\rangle}}
\newcommand{\bra}[1]{\ensuremath{\langle #1|}}

\newcommand{\proj}[1]{\ket{#1}\bra{#1}}
\newcommand{\be}{\begin{equation}}
\newcommand{\ee}{\end{equation}}
\newcommand{\ba}{\begin{eqnarray}}
\newcommand{\ea}{\end{eqnarray}}

\newcommand{\norm}[1]{\left\|#1\right\|}

\newcommand{\id}{\mathbb{I}}

\newcommand{\freeset}{\mathcal{F}_{R}}
\newcommand{\freeop}{\mathcal{O}_R}

\newcommand{\Wdiff}{\Delta}

\newtheorem{result}{Theorem}

\newtheorem{result-coro}[result]{Corollary}
\newtheorem{lemma}[result]{Lemma}

\newtheorem{question}{Question}

\definecolor{nred}{rgb}{0.9,0.1,0.1}
\definecolor{nblack}{rgb}{0,0,0}
\definecolor{nblue}{rgb}{0.2,0.2,0.8}
\definecolor{ngreen}{rgb}{0.2,0.6,0.2}
\definecolor{ublue}{rgb}{0,0,0.5}

\definecolor{pur}{rgb}{0.75,0,0.75}
\definecolor{nngrn}{rgb}{0,0.5,0.5}
\definecolor{CitingColor}{rgb}{0,0.3,1}

\newcommand{\blu}{\color{nblue}}

\newcommand{\CY}[1]{{\color{black}#1}}
\newcommand{\CYtwo}[1]{{\color{black}#1}}
\newcommand{\CYthree}[1]{{\color{black}#1}}

\begin{document}
\title{Complete characterisation of state conversions by work extraction}

\author{Chung-Yun Hsieh}
\email{chung-yun.hsieh@bristol.ac.uk}
\affiliation{H. H. Wills Physics Laboratory, University of Bristol, Tyndall Avenue, Bristol, BS8 1TL, UK}

\author{Manuel Gessner}
\email{manuel.gessner@uv.es}
\affiliation{Instituto de Física Corpuscular (IFIC), CSIC‐Universitat de València and Departament de Física Teòrica, UV, C/Dr Moliner 50, E-46100 Burjassot (Valencia), Spain} 

\date{\today}

\begin{abstract}
\CY{We introduce a thermodynamic work extraction task that describes the energy storage enhancement of quantum systems, which is naturally related to quantum battery's charging process. This task induces majorisation-like conditions that provide a necessary and sufficient characterisation of state conversions in general quantum resource theories. When applied to specific resources, \CYthree{these conditions reduce to the majorisation conditions under unital channels and provide a thermodynamic version of Nielsen's theorem in entanglement theory.} We show how this result establishes the first universal resource certification class based on thermodynamics, and how it can be employed to quantify general quantum resources based on work extraction.}
\end{abstract}

\maketitle

Quantum advantages underpin the development of quantum science and technologies. Enabled by {quantum resources}, it is possible for certain quantum information tasks to surpass the performance of classical strategies that do not make use of \CY{these resources~\cite{Kuroiwa2024PRL,Kuroiwa2024PRA,Meier2025PRXEnergy}}. For instance, quantum teleportation~\cite{Bennett93}, super-dense coding~\cite{Bennett92}, and sub-shot-noise interferometric precision with qubit probes~\cite{RevModPhys.90.035005,PhysRevLett.126.080502} are possible only with entanglement as a resource~\cite{HorodeckiRMP}. Different levels of advantages in so-called device-independent quantum information tasks in cryptography and communication are enabled by utilising quantum nonlocality~\cite{Brunner2014RMP,Acin2007PRL}, quantum steering~\cite{UolaRMP2020,Cavalcanti2016,Branciard2012PRA}, measurement incompatibility~\cite{Otfried2021Rev}, and quantum complementarity~\cite{Hsieh2023}. Furthermore, certain quantum dynamical features are, in fact, indispensable resources for quantum memories~\cite{Rosset2018PRX,Yuan2021npjQI,Ku2022PRXQ,Vieira2024,Abiuso2024,NarasimhacharPRL2019,Hsieh2025PRA-3}, quantum communication~\cite{Takagi2020PRL,Hsieh2021PRXQ,Hsieh2025PRL,Hsieh2025PRA}, preserving/generating quantum phenomena~\cite{Hsieh2020,Hsieh2021PRXQ,Liu2019DRT,Liu2020PRR,Hsieh2020PRR,Streltsov2015PRL}, and non-equilibrium thermodynamics~\cite{Hsieh2021PRXQ,Hsieh2025PRL,Hsieh2025PRA,Stratton2023}.

Most of the above-mentioned tasks and \CY{their} quantum advantages are resource-{\em dependent}. For instance, we do not expect a non-equilibrium state (a thermodynamic resource called athermality~\cite{Lostaglio2019,Faist2015NJP,HorodeckiPRL2003,HorodeckiPRA2003}) to be also useful for teleportation. Also, a quantum dynamics that is able to generate entanglement does not necessarily have a strong ability to transmit information. A natural question is thus whether there is a single, \CY{resource-{\em independent}} class of operational tasks in which general quantum resources can provide \CY{advantages} over resource-free ones. Performing such a class of tasks in the laboratory implies the ability to certify quantum resources operationally independent of the types of resources.

Remarkably, such {\em universal resource certification classes} \CY{(URCCs)} do exist. Through a general approach called {\em quantum resource theories}~\cite{ChitambarRMP2019}, \CY{or simply {\em resource theories},} it has been proved that general quantum resources \CY{(with reasonable physical assumptions)} can provide advantages in discrimination tasks~\cite{Takagi2019,Skrzypczyk2019,Hsieh2023-2,Hsieh2022PRR}, \CY{exclusion tasks~\cite{Ducuara2020PRL,Hsieh2023,Uola2020PRL}, parameter-estimation tasks in metrology~\cite{Tan2021PRL}, and input-output games~\cite{Uola2020PRA}. These URCCs tell us how to universally certify quantum resources via their operational advantages in the corresponding physical tasks, serving as vital interdisciplinary bridges.}

\CY{Surprisingly, there is no known thermodynamic URCC.} In thermodynamics, quantum signatures have been identified in quantum heat engines~\cite{KosloffPRE2002,FeldmannPRE2006,JiPRL2022,BeyerPRL2019,ChanPRA2022,Biswas2025PRL} \CY{as well as} conservation laws~\cite{Jennings2010PRE,LostaglioNJP2017,Majidy2023,YungerHalpern2016NC,Guryanova2016NC} (see also Refs.~\cite{Lostaglio2020PRL,Levy2020PRXQ,Puliyil2022PRL,Upadhyaya2023,Centrone2024}), \CY{and thermodynamic advantages} have been demonstrated by using entanglement~\cite{Lipkabartosik2023,Perarnau-LlobetPRX2015,JenningsPRE2010,Rio2011,Skrzypczyk2014NC}, coherence~\cite{Shiraishi2023,Korzekwa2016NJP}, steering~\cite{JiPRL2022,BeyerPRL2019,ChanPRA2022,Biswas2025PRL}, \CY{incompatibility~\cite{Hsieh2024}, quantum dynamics~\cite{Hsieh2020PRR,Hsieh2021PRXQ,Hsieh2020,Hsieh2025PRL,Hsieh2025PRA}. Recently, it has been shown that one can witness certain quantum properties by observing heat~\cite{deOliveiraJunior2025PRL}. Yet, it remains unknown whether there is a {\em single} class of operational and controllable tasks in thermodynamics that can certify general quantum resources. Any such URCC, once found, can reveal operational advantages of general quantum resources in thermodynamics.
}

\CY{There is a {\em stronger version} of the above question---one whose answer would directly imply the existence of such a thermodynamic URCC. This stronger question is: 
\begin{center}
{\em Is there a class of work-extraction tasks that can completely characterise state conversions in general resource theories?}
\end{center}
Such a class, if exists, can provide a full ``ordering'' for the resource contents of states, which can be used to witness the resource (see, e.g.,~\cite{Buscemi2012PRL}). Moreover, such a class is by construction fully operational, as {\em work} is energy in a controlled, reusable form. 
}

\CY{Here, we fully answer the above question in the positive.
We introduce a work extraction task describing quantum systems' energy storage enhancement [Eq.~\eqref{Eq:Work diff}]. Then, we show that it can induce majorisation-like conditions to {\em completely} determine the (one-shot) state conversion in general resource theories (Theorem~\ref{Result:conversion}). 
Interestingly, when we apply this result to the resource theory of informational non-equilibrium (see, e.g., Ref.~\cite{Purity-review}), we reproduce the well-known majorisation conditions for state conversions under unital channels (Theorem~\ref{coro}). Finally, this result also allows us to obtain the first work-like thermodynamic URCC (Theorem~\ref{coro:state}) as well as to thermodynamically quantify general resources (Theorem~\ref{result:quantification R}).

An extension of our result beyond state resources, as well as its thermodynamic implications for steering and measurement incompatibility are discussed in a separate companion paper~\cite{Companion-arXiv}.
}

\section{Preliminary Notions}

\CY{
\subsection{Quantum resource theories}
We begin by fixing notation. 
A quantum system is described by a state $\rho$, i.e. a positive semidefinite operator with unit trace: $\rho\ge0$ and ${\rm tr}(\rho)=1$.
The general evolution of such a system is represented by a quantum channel, a \CYthree{completely positive, trace-preserving} linear map $\mathcal{E}$ that sends an initial state $\rho$ to $\mathcal{E}(\rho)$~\cite{QIC-book}.

To study diverse quantum properties on a common footing, one employs the framework of resource theories~\cite{ChitambarRMP2019}, which provide a unified formalism for characterising and quantifying quantum resources. To formalize the notion of a resource, let $R$ denote a quantum property of interest (e.g., entanglement or coherence).
A resource theory of $R$ is specified by a pair $(\freeset,\freeop)$. The set $\freeset$ consists of the \CYthree{\em free states}, i.e., states that do not contain $R$.
A state $\rho$ is \CYthree{\em resourceful} if $\rho\notin\freeset$.
For example, for entanglement, $\freeset$ is the set of separable states.

The set $\freeop$ contains the \CYthree{\em allowed operations}, i.e., channels that cannot generate $R$ from free inputs.
\CYthree{These are operations allowed to manipulate states.
Formally, any channel $\mathcal{N}$ in $\freeop$ must satisfy}
\begin{align}\label{Eq: golden rule}
\mathcal{N}(\eta)\in\freeset \quad \forall\;\eta\in\freeset.
\end{align}
\CYthree{Apart from Eq.~\eqref{Eq: golden rule}, one may consider extra constraints on $\freeop$ depends on the physical context.}
Different choices of $\freeop$ lead to different operational settings for resource manipulation. For instance, in the case of entanglement, setting $\freeop$ as the set of local operations with classical communication (LOCC) or as the set of local operations with shared randomness (LOSR) leads to different characterisations.}

\CY{
\subsection{Working Hypotheses}
In most cases, $\freeop$ carries the following properties:
\begin{enumerate}
\item {\em Convexity.} Classical mixing of states is an allowed operation. 
\item {\em Compactness.} If a channel is arbitrarily close to an allowed operation, it is also allowed~\footnote{\CY{We always assume $\freeop$ is compact in the topology induced by the diamond distance, i.e., the distance measure induced by diamond norm $\norm{\cdot}_\diamond$~\cite{Watrous-book} (see also, e.g., Ref.~\cite{Regula2021Quantum}).}}.
\item {\em Containing identity channel}. Doing nothing is allowed.
\item {\em Closedness under function composition}. Sequentially applying two allowed operations is also allowed.
\item {\em Containing state-preparation channels of free states}. That is, preparing free states is allowed.
\end{enumerate}
From now on, we assume that $\freeop$ describes a set of allowed operations with these properties. Notably, \CYthree{with this assumption}, as detailed in Appendix A, the free set $\freeset$ must be convex and compact (in the topology induced by the one norm).
}

\section{Results}
\subsection{Conversion problem in resource theories}
\CY{
One of the central goals in arguably every resource theory is to understand the conversion under allowed operations.
To formalise it, consider a given $\freeop$ and two states $\rho$ and $\sigma$.
If there exists an allowed operation $\mathcal{N}\in\freeop$ achieving \mbox{$\mathcal{N}(\rho)=\sigma$}, we write $\rho\stackrel{\freeop}{\longrightarrow}\sigma$.
The {\em conversion problem} refers to the question:
\begin{center}
{\em Under which conditions do we have $\rho\stackrel{\freeop}{\longrightarrow}\sigma$?}
\end{center}
Conceptually, a complete answer to the conversion problem can tell us, necessarily and sufficiently, whether an evolution from one state to another via an allowed operation can ever be possible, just like the Second Law of thermodynamics (see, e.g., Refs.~\cite{Brandão2015,Ćwikliński2015PRL,Gour2018NC,Theurer2023NJP,Gour2022PRXQ}). Moreover, such an answer may serve as a bridge to further demonstrate the resource's advantage in specific operational tasks (e.g., as in Ref.~\cite{Buscemi2012PRL}).
Here, we aim to develop a fully thermodynamic way to answer conversion problems for general resource theories \CYthree{(Fig.~\ref{Fig:main_question})}. To this end, we need to introduce the following task.
}

\begin{figure}
\scalebox{0.8}{\includegraphics{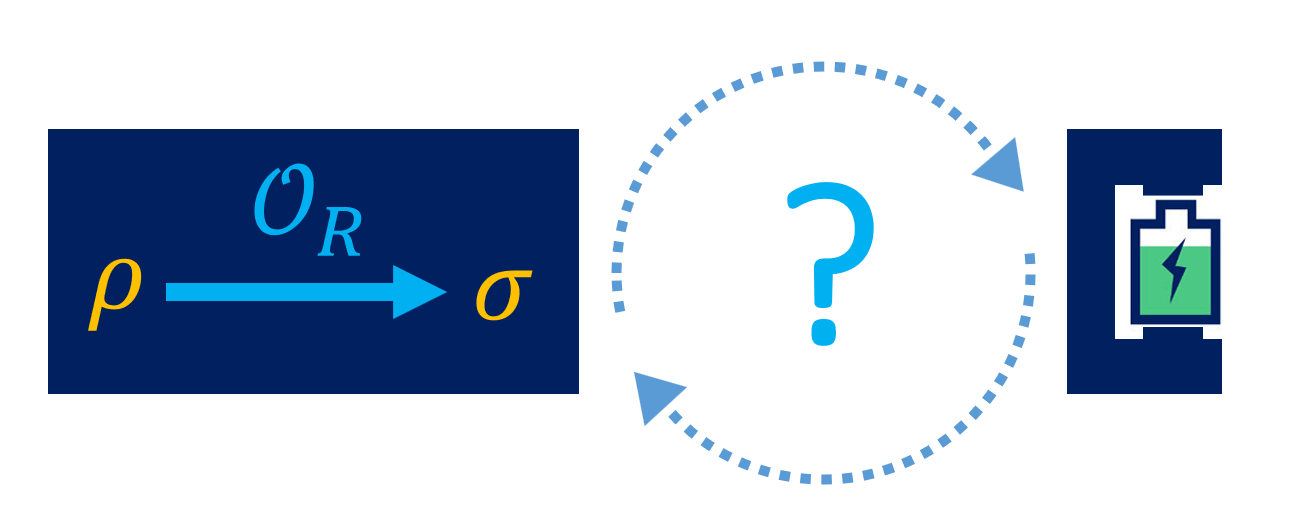}}
\caption{
{\bf \CYthree{Schematic illustration of the central question.}}
\CYthree{We ask whether any class of work extraction tasks can completely characterise state conversions in general resource theories. 
A suitable answer can offer a method to analyse a broad range of quantum effects via quantum batteries.}
}
\label{Fig:main_question}
\end{figure}

\subsection{Energy storage enhancement as a work extraction task and its relation with quantum batteries}
In order to define a work extraction task that is capable of certifying \CY{quantum resources, we focus} on a finite-dimensional quantum system with Hamiltonian $H$.
\CY{When this system reaches thermal equilibrium with a large bath in temperature $0~<~T~<~\infty$, it} will be described by the thermal state
\begin{align}\label{Eq:thermal state}
\CY{\gamma = e^{-\frac{H}{k_BT}}/{\rm tr}\left(e^{-\frac{H}{k_BT}}\right),}
\end{align}
where $k_B$ is the Boltzmann constant.
When this system is prepared in a non-equilibrium state $\rho$, one can extract the following optimal amount of work from it \CY{(see, e.g., Refs.~\cite{Skrzypczyk2014NC,Brandao2013PRL})}:
\begin{align}\label{Eq:W def}
\CY{W(\rho,H) \coloneqq (k_BT\ln2)D\left(\rho\,\|\,\gamma\right)},
\end{align}
where 
$
\CY{D(\rho\,\|\,\sigma)\coloneqq{\rm tr}\left[\rho\left(\log_2\rho - \log_2\sigma\right)\right]}
$
is the \CYtwo{quantum (Umegaki) relative entropy~\cite{Umegaki1962}.}
Notably, when \CYtwo{the Hamiltonian is fully degenerate}; i.e., $H=0$, the above equation gives the optimal work extractable from \CY{$\rho$'s information content:}
\begin{align}\label{Eq:Winf def}
\CY{W_{{\rm inf}}(\rho) \coloneqq W(\rho,H=0) = (k_BT\ln2)D\left(\rho\,\|\,\id/d\right).}
\end{align}

Inspired by Ref.~\cite{Oppenheim2002PRL}, we introduce the following figure-of-merit \CY{called {\em energy storage enhancement} of $\rho$ with $H$}:
\begin{align}\label{Eq:Work diff}
\Wdiff(\rho,H)\coloneqq W(\rho,H) - W_{{\rm inf}}(\rho).
\end{align}
\CY{$\Wdiff$ has a direct operational interpretation.
If the system is initially prepared in a state $\rho$ under a fully degenerate Hamiltonian, then $\Wdiff(\rho,H)$ is the change (possibly negative) in extractable work when the Hamiltonian is suddenly quenched to $H$ without altering the state (see Fig.~\ref{Fig}).
Concretely, the extractable work with the initial fully degenerate Hamiltonian is $W_{\rm inf}(\rho)$ [Eq.~\eqref{Eq:Winf def}], and after the quench to $H$ it is $W(\rho,H)$ [Eq.~\eqref{Eq:W def}].
The difference, $\Delta(\rho,H)$, quantifies the additional amount of energy that can be stored through this charging process. In this sense, $\Delta$ naturally characterizes the charging capacity of a quantum battery, a device that can controllably store and release energy~\cite{Perarnau-LlobetPRX2015,Vinjanampathy2016CP,Ciampini2017npjQI,Francica2017npjQI,Andolina2019PRL,Monsel2020PRL,Opatrny2021PRL,Yang2023PRL}.
}

\begin{figure}
\scalebox{0.8}{\includegraphics{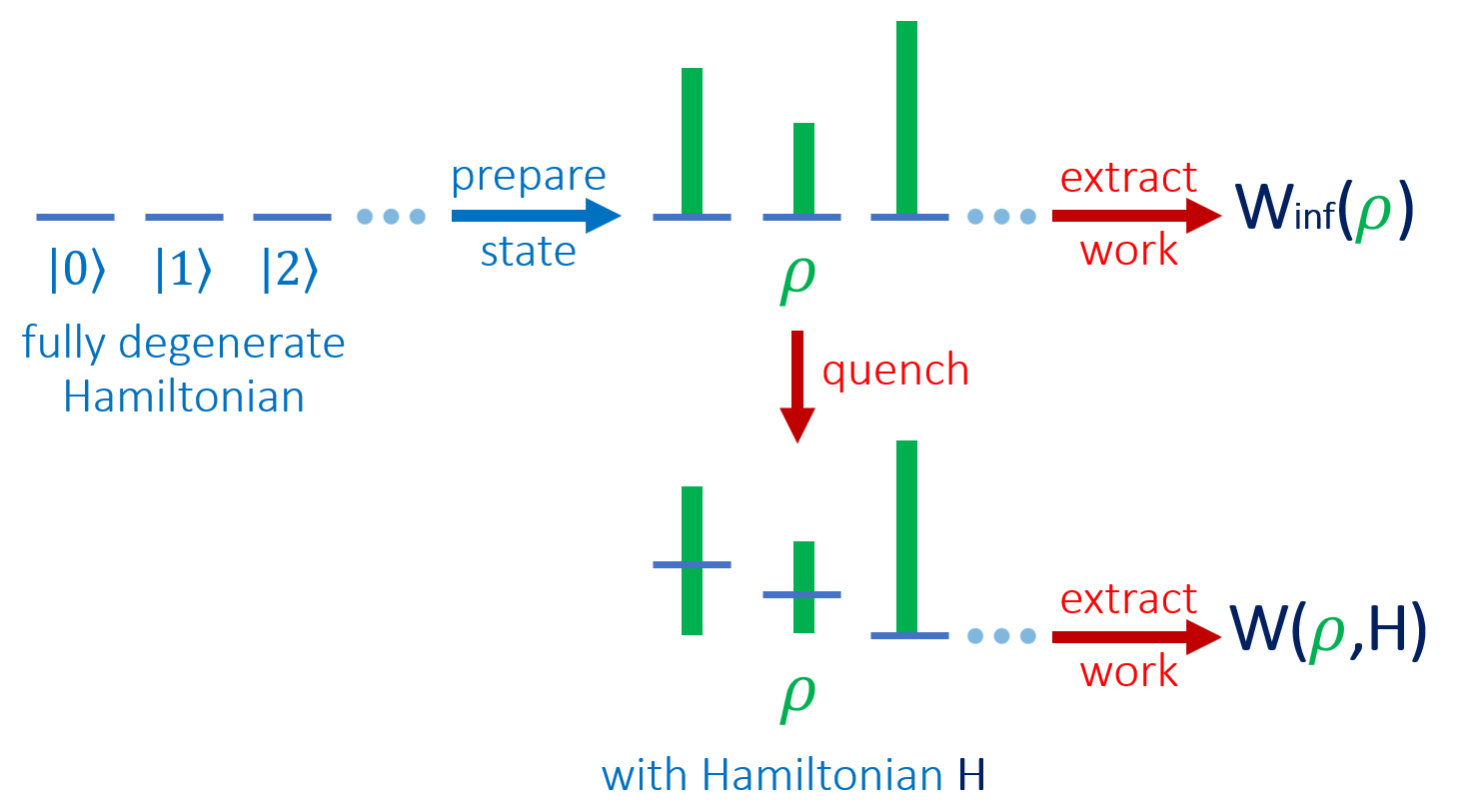}}
\caption{
{\bf \CY{Work extraction task that describes energy storage enhancement}.}
Each round starts with \CY{the same initial state $\rho$ subject to} a fully degenerate initial Hamiltonian.
\CY{If we extract work with this setting, we obtain $W_{\rm inf}(\rho)$ [Eq.~\eqref{Eq:Winf def}]. Alternatively, if we {\em quench} the Hamiltonian into a new one, $H$, and then perform work extraction, we obtain $W(\rho,H)$ [Eq.~\eqref{Eq:W def}].
The energy storage change} $\Wdiff(\rho,H)$ \CY{is} the difference between these two work values.
}
\label{Fig}
\end{figure}

\subsection{Completely characterising state conversion by work extraction}
\CY{Here, by using $\Delta$, we can fully characterise conversion problems for general resource theories.
To see this, for a given state $\rho$ and Hamiltonian $H$, define
\begin{align}\label{Eq:Def Delta_OR}
\Delta_{\freeop}(\rho,H)\coloneqq\max_{\mathcal{N}\in\freeop}\Delta[\mathcal{N}(\rho),H],
\end{align}
which is the highest value of $\Delta$ that can be reached by applying allowed operations to the state $\rho$.
We thus call $\Delta_{\freeop}(\rho,H)$ the {\em $\freeop$-assisted energy storage enhancement} of $\rho$ with $H$.
Then, we have the following result, serving as a \CYthree{thermodynamic, complete characterisation of state conversion.}
\begin{result}\label{Result:conversion}
Let \mbox{$0\le\epsilon<\delta$} be fixed energy scales.
Then \mbox{$\rho\stackrel{\freeop}{\longrightarrow}\sigma$} if and only if
\begin{align}\label{Eq:Result:conversion}
\Delta_{\freeop}(\rho,H)\ge\Delta_{\freeop}(\sigma,H)\quad\forall\,\epsilon\id\le H\le\delta\id.
\end{align}
\end{result}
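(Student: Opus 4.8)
The plan is to strip $\Wdiff$ down to a linear energy functional, after which the claim becomes the textbook duality between a convex body and its support function. First I would expand the two relative entropies in Eq.~\eqref{Eq:Work diff}. Writing $\gamma=e^{-H/(k_BT)}/Z$ with $Z=\tr\,e^{-H/(k_BT)}$, the state-entropy term $\tr[\rho\log_2\rho]$ is common to $W(\rho,H)$ and $W_{\rm inf}(\rho)$ and cancels in their difference, leaving
\begin{align}
\Wdiff(\rho,H)=\tr[\rho H]+k_BT\ln(Z/d),\nonumber
\end{align}
where the additive term depends only on $H$ and $d$, not on $\rho$. This constant therefore factors out of the maximisation in Eq.~\eqref{Eq:Def Delta_OR}, so that Eq.~\eqref{Eq:Result:conversion} is equivalent to $f_\rho(H)\ge f_\sigma(H)$ on the band $\epsilon\id\le H\le\delta\id$, where I abbreviate $f_\rho(H):=\max_{\mathcal{N}\in\freeop}\tr[\mathcal{N}(\rho)H]$.

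Next I would recognise $f_\rho$ as the support function of the reachable set $\mathcal{R}(\rho):=\{\mathcal{N}(\rho):\mathcal{N}\in\freeop\}$, viewed inside the real Hilbert space of Hermitian operators with the Hilbert--Schmidt inner product. Convexity of $\freeop$ makes $\mathcal{R}(\rho)$ convex, while compactness of $\freeop$ together with continuity of $\mathcal{N}\mapsto\mathcal{N}(\rho)$ makes it compact, so the maximum is attained. Because every channel is trace-preserving, $f_\rho$ is affine-covariant, $f_\rho(\alpha H+c\id)=\alpha f_\rho(H)+c$ for $\alpha>0$. Since any non-scalar Hermitian $H$ can be mapped by such a transformation onto one with spectrum in $[\epsilon,\delta]$ (take $\alpha=(\lambda_{\max}-\lambda_{\min})/(\delta-\epsilon)$), and scalar $H$ gives equality trivially, the band inequality is in fact equivalent to $f_\rho(H)\ge f_\sigma(H)$ for \emph{all} Hermitian $H$.

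Finally I would invoke the standard fact that, for convex compact sets, $\mathcal{R}(\sigma)\subseteq\mathcal{R}(\rho)$ holds if and only if $f_\sigma\le f_\rho$ everywhere, the nontrivial direction being a separating-hyperplane argument. Necessity of Eq.~\eqref{Eq:Result:conversion} then follows from $\rho\stackrel{\freeop}{\longrightarrow}\sigma$, say via $\mathcal{N}\in\freeop$ with $\mathcal{N}(\rho)=\sigma$, together with closedness of $\freeop$ under composition: any $\mathcal{M}(\sigma)$ equals $(\mathcal{M}\circ\mathcal{N})(\rho)$ with $\mathcal{M}\circ\mathcal{N}\in\freeop$, giving $\mathcal{R}(\sigma)\subseteq\mathcal{R}(\rho)$. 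Sufficiency follows because the identity channel lies in $\freeop$, so $\sigma\in\mathcal{R}(\sigma)\subseteq\mathcal{R}(\rho)$, i.e.\ $\rho\stackrel{\freeop}{\longrightarrow}\sigma$.

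The only genuinely non-routine move is the opening one: once $\Wdiff$ collapses to $\tr[\rho H]$ up to a state-independent constant, everything afterwards is convex duality. The step most easily mishandled is the range extension --- one must verify that restricting $H$ to the band $\epsilon\id\le H\le\delta\id$ discards no information, which rests precisely on the unit-trace normalisation of states through the affine-covariance of $f_\rho$.
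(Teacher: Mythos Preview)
Your proof is correct and reaches the same conclusion through a somewhat different lens than the paper. Both arguments begin identically by collapsing $\Wdiff(\rho,H)$ to $\tr[\rho H]$ plus a state-independent constant (this is the paper's Eq.~\eqref{Eq:WgapRelation}). From there, the paper applies Sion's minimax theorem directly to swap $\min_H\max_{\mathcal{N}}$ into $\max_{\mathcal{N}}\min_H$, extracts an optimiser $\mathcal{N}_*\in\freeop$ via a Lipschitz-continuity argument, and then observes that $\mathcal{N}_*(\rho)-\sigma$ is traceless and positive semi-definite on the band, hence zero. You instead package the same content geometrically: introduce the reachable set $\mathcal{R}(\rho)$, identify $f_\rho$ as its support function, use affine covariance to lift the inequality from the band to all Hermitian $H$, and invoke the support-function characterisation of inclusion of convex compact sets (a separating-hyperplane argument). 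Your route is arguably cleaner --- it sidesteps the explicit continuity step the paper uses to secure the maximiser, and it makes transparent why restricting to the band loses nothing. The paper's route, by contrast, isolates Sion's theorem as the single named tool and avoids extending $H$ beyond the physical band. One small slip: your scaling factor should be $\alpha=(\delta-\epsilon)/(\lambda_{\max}-\lambda_{\min})$, the reciprocal of what you wrote.
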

We detail the proof in Appendix B.
The above finding serves as the first thermodynamic characterisation, in a necessary and sufficient way, of general one-shot state conversion, i.e., it assumes access to a single copy of the state.
Here, $\epsilon$ and $\delta$ are two energy scales that can be freely chosen to fit the relevant physical settings. 
Hence, Theorem~\ref{Result:conversion} suggests that a complete characterisation can already be achieved by focusing on Hamiltonians with limited energy scales.

Interestingly, Eq.~\eqref{Eq:Result:conversion} takes a form that is similar to the so-called majorisation (see, e.g., Ref.~\cite{Purity-review}), as it implies that the state $\rho$ ``outperforms'' (or at least equals) $\sigma$ in terms of the figure-of-merit $\Delta_{\freeop}$ for every possible Hamiltonian in the given range.
We will see in the next section that, when applied to the appropriate setting, Theorem~\ref{Result:conversion} indeed reproduces the actual majorisation condition.
}

\subsection{Implication for informational thermodynamics and entanglement}
\CY{
We now apply our result to the resource theory of {\em informational non-equilibrium} (see Ref.~\cite{Purity-review} for review).
For a $d$-dimensional system, the free set is \mbox{$\freeset=\{\id/d\}$,} corresponding to the thermal state \CYthree{[see Eq.~\eqref{Eq:thermal state}]} of a fully degenerate Hamiltonian.
In this case, with no energy differences, the thermodynamic contribution stems solely from the state’s information content.
The allowed operations $\freeop$ are all energy-conserving closed system dynamics and their classical mixtures, i.e., all mixed unitary channels. 
We write ``$\rho\stackrel{\rm MU}{\longrightarrow}\sigma$'' if there is one such channel converting $\rho$ into $\sigma$.
As we have a fixed system dimension, conversions under mixed unitary channels are equivalent to conversions under unital channels, denoted by ``$\rho\stackrel{\rm unital}{\longrightarrow}\sigma$'' (see Lemma 10 in Ref.~\cite{Purity-review}).
These state conversions can be characterised by the so-called majorisation. 
Formally, for a given state $\rho$, let $\{\rho_i^\downarrow\}_{i=0}^{d-1}$ be its eigenvalues in the \CYthree{non-increasing order,} i.e., $\rho_i^\downarrow\ge\rho_{i+1}^\downarrow$ $\forall\,i=0,...,d-2$.
Then, $\rho$ is said to {\em majorise} $\sigma$, denoted by $\rho\succ\sigma$, if (see, e.g., Ref.~\cite{Purity-review})
\begin{align}\label{Eq:majorisation}
\sum_{i=0}^k\rho_i^\downarrow\ge\sum_{i=0}^k\sigma_i^\downarrow\quad\forall\,k=0,...,d-2.
\end{align}
As detailed in Ref.~\cite{Purity-review}, by applying Hardy-Littlewood-P\'olya theorem~\cite{Hardy-Littlewood-Pólya} (see also Ref.~\footnote{\CY{To see this, it suffices to combine Lemmas 6, 10, 16 and Definition 15 in Ref.~\cite{Purity-review}.}}), we have the well-known result (here, ``iff'' means ``if and only if''): 
\begin{align}\label{Eq:Q2ndLaws info}
\rho\stackrel{\rm unital}{\longrightarrow}\sigma\quad\text{\em iff}\quad\rho\stackrel{\rm MU}{\longrightarrow}\sigma\quad\text{\em iff}\quad\rho\succ\sigma.
\end{align}
Namely, state conversions under allowed operations of informational non-equilibrium are fully captured by $d-1$ inequalities through majorisation Eq.~\eqref{Eq:majorisation}.
Here, as an application, we will demonstrate that Theorem~\ref{Result:conversion} reproduces the above result in this special case. Notably, \CYthree{as Theorem~\ref{Result:conversion}'s proof relies on Sion's minimax theorem~\cite{Sion1958}, rather than the Hardy-Littlewood-P\'olya theorem, we thus uncover a novel proof for this well-known result.}

Let $\{\ket{i}\}_{i=0}^{d-1}$ be a fixed basis and \mbox{$\rho^\downarrow\coloneqq\sum_{i=0}^{d-1}\rho_i^\downarrow\proj{i}$,} which is a ``reordered'' version of the state $\rho$ in this basis.
For $k=0,...,d-2$, consider Hamiltonians \mbox{$H_k\coloneqq\omega\sum_{i=0}^{k}\proj{i}$}, where $\omega>0$ is a fixed energy scale. 
Each $H_k$ has two different energy levels (i.e., $0$ and $\omega>0$) and $(d-k-1)$-fold ground state degeneracy.
Then, applying Theorem~\ref{Result:conversion} gives (see Appendix C for proof):
\begin{result}\label{coro}
For two $d$-dimensional states $\rho$ and $\sigma$, we have $\rho\stackrel{\rm unital}{\longrightarrow}\sigma$ if and only if $\rho\stackrel{\rm MU}{\longrightarrow}\sigma$ if and only if
\begin{align}\label{Eq:Info non-equilibrium result}
\Delta\left(\rho^\downarrow,H_k\right)\ge\Delta\left(\sigma^\downarrow,H_k\right)\quad\forall\,k=0,...,d-2.
\end{align}
Moreover, the above statement implies Eq.~\eqref{Eq:Q2ndLaws info}.
\end{result}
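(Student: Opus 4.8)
The plan is to specialise Theorem~\ref{Result:conversion} to this resource theory and show that the continuum of Hamiltonian tests collapses onto the $d-1$ inequalities in Eq.~\eqref{Eq:Info non-equilibrium result}.

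First I would make $\Delta$ explicit. Writing $\gamma=e^{-H/(k_BT)}/Z$ with $Z=\tr(e^{-H/(k_BT)})$ and expanding Eqs.~\eqref{Eq:W def}--\eqref{Eq:Work diff}, the entropy contributions $\tr[\rho\log_2\rho]$ occurring in both $W(\rho,H)$ and $W_{\rm inf}(\rho)$ cancel, leaving
\begin{align}
\Delta(\rho,H)=\tr[\rho H]+c(H),\qquad c(H)\coloneqq(k_BT\ln2)\log_2\!\big(Z/d\big),
\end{align}
where $c(H)$ is independent of $\rho$. The crucial structural fact is therefore that, in informational non-equilibrium, the entire state-dependence of the charging figure-of-merit is the mean energy $\tr[\rho H]$, so everything downstream becomes linear.

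Next I would evaluate the $\freeop$-assisted quantity in Eq.~\eqref{Eq:Def Delta_OR}. Since $\Delta[\,\cdot\,,H]$ is affine in the state and every mixed-unitary output lies in the convex hull of the unitary conjugates $U\rho U^\dagger$, the maximum is attained at a single unitary, so $\Delta_{\freeop}(\rho,H)=\max_U\tr[U\rho U^\dagger H]+c(H)$. Von Neumann's trace inequality then evaluates this to $\tr[\rho^\downarrow H^\downarrow]+c(H)$, where $H^\downarrow$ carries the non-increasing eigenvalues of $H$ in the fixed basis. In particular, because each $H_k$ is already diagonal and non-increasing, $\Delta_{\freeop}(\rho,H_k)=\Delta(\rho^\downarrow,H_k)$, which is the left-hand side of Eq.~\eqref{Eq:Info non-equilibrium result}. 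I would emphasise that this step invokes only the trace inequality and the extreme-point structure of mixed-unitary channels, never the majorisation characterisation being re-derived, so there is no circularity.

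Finally I would apply Theorem~\ref{Result:conversion} with $\epsilon=0$ and any $\delta\ge\omega$. After the $\rho$-independent $c(H)$ cancels, the condition $\Delta_{\freeop}(\rho,H)\ge\Delta_{\freeop}(\sigma,H)$ for all $0\le H\le\delta\id$ becomes $\sum_i(\rho_i^\downarrow-\sigma_i^\downarrow)h_i^\downarrow\ge0$ for every non-increasing spectrum $0\le h_i^\downarrow\le\delta$. The cone of non-increasing non-negative vectors is generated by the step vectors $(1,\dots,1,0,\dots,0)$ with $k+1$ leading ones, which are precisely the (rescaled) spectra of the $H_k$; testing against these extreme rays yields exactly $\sum_{i=0}^{k}\rho_i^\downarrow\ge\sum_{i=0}^{k}\sigma_i^\downarrow$ for $k=0,\dots,d-2$, the $k=d-1$ case being the trivial equality $\tr\rho=\tr\sigma$. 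This is Eq.~\eqref{Eq:Info non-equilibrium result}, equivalently the majorisation~\eqref{Eq:majorisation}; combining it with the cited equivalence of unital and mixed-unitary reachability recovers the full chain~\eqref{Eq:Q2ndLaws info}, now proved through Theorem~\ref{Result:conversion} (hence Sion's minimax) and von Neumann's trace inequality rather than Hardy--Littlewood--P\'olya. I expect the main obstacle to be this last reduction: verifying via the extreme-ray (linear-programming-duality) structure of the majorisation cone that the finite family $\{H_k\}$ already exhausts the continuum of Hamiltonian tests, together with the care needed to compute $\Delta_{\freeop}$ without tacitly assuming the reachability result being proven.
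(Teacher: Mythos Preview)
Your proposal is correct and follows essentially the same route as the paper's proof: specialise Theorem~\ref{Result:conversion} to the mixed-unitary setting, use the affine form of $\Delta$ to reduce to $\max_U\tr[U\rho U^\dagger H]$, evaluate this as $\sum_i\rho_i^\downarrow E_i^\downarrow(H)$, and then collapse the continuum of Hamiltonians onto the $d-1$ step Hamiltonians $H_k$. The only cosmetic differences are that the paper invokes the passive-state characterisation where you cite von Neumann's trace inequality, and it uses an explicit Abel-summation/telescoping estimate where you appeal to the extreme-ray structure of the majorisation cone---these are the same facts in different dress.
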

This illustrates Theorem~\ref{Result:conversion}'s applicability to specific resources. 
In fact, Theorem~\ref{coro} goes beyond the existing notion of Eq.~\eqref{Eq:Q2ndLaws info} by linking state conversions to a {\em work-extraction task} with a clear thermodynamic interpretation. The involved Hamiltonians describe simple systems whose energy scales can be adjusted arbitrarily to fit practical settings.
Since $\Delta$ can be obtained by two work extraction experiments (one for $W_{\rm inf}$; one for $W$), Theorem~\ref{coro} suggests that, in a $d$-dimensional system,
\begin{center}
{\em $2(d-1)$ work extraction experiments with simple Hamiltonians completely characterise state conversions for informational thermodynamics.}
\end{center}
Hence, our results not only generalise the well-known result for unital channel conversions [Eq.~\eqref{Eq:Q2ndLaws info}], but also provide an experimentally testable form for it via work extraction.
}

\CY{
Next, we study the implications of Theorem~\ref{coro} for entanglement theory.
Suppose $\ket{\psi},\ket{\phi}$ are bipartite pure states with equal local dimension $d$, and $\psi,\phi$ are their reduced states in the first sub-system. 
The well-known Nielsen's theorem~\cite{QIC-book,Nielsen1999PRL} states that $\ket{\psi}\stackrel{\rm LOCC}{\longrightarrow}\ket{\phi}$ if and only if $\psi\prec\phi$, where \mbox{``$\ket{\psi}\stackrel{\rm LOCC}{\longrightarrow}\ket{\phi}$''} means that one can convert $\ket{\psi}$ into $\ket{\phi}$ by LOCC.
Then, Theorem~\ref{coro} and Nielsen's theorem jointly imply that 
$\ket{\psi}\stackrel{\rm LOCC}{\longrightarrow}\ket{\phi}$ if and only if
\begin{align}
\Delta\left(\psi^\downarrow,H_k\right)\le\Delta\left(\phi^\downarrow,H_k\right)\quad\forall\,k=0,...,d-2.
\end{align}
Hence, Theorem~\ref{coro} uncovers a {\em thermodynamic form} of Nielsen's theorem: $2(d-1)$ different work extraction experiments {\em performed in the local system} are enough to fully characterise pure state conversions under LOCC.

}

\CY{
\subsection{Witnessing and quantifying quantum resources by work extraction}
We are now in the position to answer the question about the existence of a thermodynamic URCC.
From Theorem~\ref{Result:conversion} we can derive the following result:
\begin{result}\label{coro:state}
Let \mbox{$0\le\epsilon<\delta$} be fixed energy scales.
Then \mbox{$\rho\notin\freeset$} if and only if there is a Hamiltonian $\epsilon\id\le H\le\delta\id$ such that
\begin{align}
\Wdiff\left(\rho,H\right)>\max_{\eta\in\freeset}\Wdiff\left(\eta,H\right).
\end{align}
\end{result}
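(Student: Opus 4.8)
The plan is to reduce the entire statement to a separating-hyperplane argument by exposing a hidden linearity of $\Delta$. First I would evaluate $\Delta$ explicitly: writing $\gamma=e^{-H/k_BT}/Z$ with $Z=\tr(e^{-H/k_BT})$ and expanding the relative entropies in Eqs.~\eqref{Eq:W def} and \eqref{Eq:Winf def} before inserting them into the definition \eqref{Eq:Work diff}, the state-dependent term $\tr(\rho\log_2\rho)$ common to $W(\rho,H)$ and $W_{\rm inf}(\rho)$ cancels, leaving
\begin{align}
\Delta(\rho,H)=\tr(\rho H)+k_BT\ln(Z/d).
\end{align}
Thus $\Delta(\cdot,H)$ is affine in the state, with an additive constant depending only on $H$. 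Since this constant is common to $\rho$ and to every $\eta\in\freeset$ (the Hamiltonian being fixed), it cancels on both sides, so
\begin{align}\label{Eq:plan-reduction}
\Delta(\rho,H)>\max_{\eta\in\freeset}\Delta(\eta,H)\;\Longleftrightarrow\;\tr(\rho H)>\max_{\eta\in\freeset}\tr(\eta H).
\end{align}
The claim therefore becomes a question about strictly separating the point $\rho$ from the set $\freeset$ by a Hamiltonian constrained to the window $\epsilon\id\le H\le\delta\id$.

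The direction ``$\Leftarrow$'' is then immediate: if $\rho\in\freeset$ then $\rho$ is itself admissible in the maximisation on the right of Eq.~\eqref{Eq:plan-reduction}, so $\tr(\rho H)\le\max_{\eta\in\freeset}\tr(\eta H)$ for every $H$; hence the existence of a strictly separating $H$ forces $\rho\notin\freeset$.

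For ``$\Rightarrow$'' I would use that $\freeset$ is convex and compact (guaranteed by the working hypotheses, Appendix~A) and apply the Hahn--Banach separation theorem in the finite-dimensional real vector space of Hermitian operators with inner product $\langle A,B\rangle=\tr(AB)$. For $\rho\notin\freeset$ this produces a Hermitian $W$ with $\tr(\rho W)>\max_{\eta\in\freeset}\tr(\eta W)$, the maximum being attained by compactness. The hard part is to fit $W$ into the prescribed energy window while keeping the strict inequality. I would exploit two affine invariances: because all states have unit trace, the shift $W\mapsto W+a\id$ changes both sides by the same $a$ and is harmless; and because $\delta>\epsilon$, scaling $W\mapsto\lambda W$ by any $\lambda>0$ preserves strict inequalities. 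Observing that $W$ cannot be a multiple of $\id$ (else $\tr(\rho W)=\tr(\eta W)$, contradicting separation), its extreme eigenvalues obey $w_{\min}<w_{\max}$, and I would set
\begin{align}
H=\frac{\delta-\epsilon}{w_{\max}-w_{\min}}\,(W-w_{\min}\id)+\epsilon\id .
\end{align}
By construction $\epsilon\id\le H\le\delta\id$, and the separation survives, giving $\tr(\rho H)>\max_{\eta\in\freeset}\tr(\eta H)$, i.e.\ via Eq.~\eqref{Eq:plan-reduction} the desired $\Delta(\rho,H)>\max_{\eta\in\freeset}\Delta(\eta,H)$.

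Finally I would remark on the link to Theorem~\ref{Result:conversion}. Fixing any free reference $\eta_0$, the golden rule \eqref{Eq: golden rule} together with the availability of free-state preparations gives $\rho\in\freeset\iff\eta_0\stackrel{\freeop}{\longrightarrow}\rho$ and $\Delta_{\freeop}(\eta_0,H)=\max_{\eta\in\freeset}\Delta(\eta,H)$; feeding this into Theorem~\ref{Result:conversion} yields an \emph{assisted} counterpart with $\Delta_{\freeop}(\rho,H)$ in place of $\Delta(\rho,H)$. Both are equivalent to $\rho\notin\freeset$, but the virtue of the present, unassisted form is that it is certified by work extraction performed directly on $\rho$, which is exactly what a thermodynamic URCC requires.
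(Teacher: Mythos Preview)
Your proof is correct but takes a genuinely different route from the paper's. You expose the affinity of $\Delta$ via the formula $\Delta(\rho,H)=\tr(\rho H)+k_BT\ln(Z/d)$ (the paper's Eq.~\eqref{Eq:WgapRelation}), then apply the Hahn--Banach separation theorem directly and use the affine invariances (shift by $a\id$, positive scaling) to fit the separating Hermitian into the window $[\epsilon\id,\delta\id]$. The paper instead derives the result as a corollary of Theorem~\ref{Result:conversion}: it introduces a minimal allowed-operation set $\mathcal{O}_R^{\rm min}$ (convex hull of the identity and all free-state preparations), observes that $\eta_*\not\stackrel{\mathcal{O}_R^{\rm min}}{\longrightarrow}\rho$ for $\rho\notin\freeset$, invokes Theorem~\ref{Result:conversion} to get $\Delta_{\mathcal{O}_R^{\rm min}}(\rho,H)>\max_{\eta\in\freeset}\Delta(\eta,H)$, and finally uses the affine structure of $\mathcal{O}_R^{\rm min}$ to strip the optimisation and recover the unassisted $\Delta(\rho,H)$. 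The paper explicitly remarks that ``such results are usually proved by the hyperplane separation theorem'' but deliberately routes the argument through Sion's minimax (via Theorem~\ref{Result:conversion}) to keep the derivation unified. Your approach is more elementary and self-contained; the paper's approach emphasises that Theorem~\ref{Result:conversion} is the master result from which the URCC drops out. Your final paragraph correctly identifies the assisted variant one gets from Theorem~\ref{Result:conversion} with the full $\freeop$, and rightly notes that the unassisted form is the operationally relevant one; the paper achieves the unassisted form precisely by shrinking $\freeop$ down to $\mathcal{O}_R^{\rm min}$ before invoking Theorem~\ref{Result:conversion}.
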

See Appendix D for the proof.
Hence, the energy storage enhancement $\Delta$ alone can certify state resources, and every resourceful state can provide an advantage over free states in enhancing energy storage under a specific charging process.
Theorem~\ref{coro:state} thus establishes the first protocol to universally certify general quantum resources via energy extraction, and thus a thermodynamic URCC.
}

\CYtwo{
In fact, apart from certifying quantum resources, we can further {\em quantify} them by work extraction via $\Delta$.
To do so, we introduce the following measure for some fixed energy scales $0\le\epsilon<\delta$ (whose dependence will be kept implicit):
\begin{align}
\mathcal{M}_{\freeop}(\rho)\coloneqq\log_2\max_{\epsilon\id\le H\le\delta\id}\frac{\Delta_{\freeop}(\rho,H)}{\max_{\eta\in\freeset}\Delta_{\freeop}(\eta,H)},
\end{align}
Then, we have the following result:
\begin{result}\label{result:quantification R}
For every energy scales $0\le\epsilon<\delta$, the measure $\mathcal{M}_{\freeop}$ is a resource quantifier of $R$; namely, it satisfies
\begin{enumerate}
\item$\mathcal{M}_{\freeop}(\rho)\ge0$ and $\mathcal{M}_{\freeop}(\rho)=0$ if and only if $\rho\in\freeset$.
\item$\mathcal{M}_{\freeop}[\mathcal{N}(\rho)]\le\mathcal{M}_{\freeop}(\rho)$ $\forall\,\mathcal{N}\in\freeop$.
\end{enumerate}
\end{result}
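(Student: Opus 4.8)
The plan is to exploit the fact that the denominator $\max_{\eta\in\freeset}\Delta_{\freeop}(\eta,H)$ does not depend on $\rho$, so that both claimed properties reduce to controlling the $\rho$-dependent numerator $\Delta_{\freeop}(\rho,H)$ against a fixed free baseline. First I would establish two structural facts about $\Delta_{\freeop}$ that follow directly from the working hypotheses on $\freeop$. Since $\freeop$ contains the state-preparation channel $\mathcal{N}_\mu(\cdot)=\mu$ for every free $\mu$, one has $\Delta_{\freeop}(\rho,H)\ge\Delta(\mu,H)$ for all $\rho$; maximising over $\mu$ gives $\Delta_{\freeop}(\rho,H)\ge\Delta_{\freeset}(H)$, where $\Delta_{\freeset}(H)\coloneqq\max_{\mu\in\freeset}\Delta(\mu,H)$. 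Conversely, by the golden rule Eq.~\eqref{Eq: golden rule} any $\mathcal{N}(\eta)$ with $\eta\in\freeset$ is again free, so $\Delta_{\freeop}(\eta,H)\le\Delta_{\freeset}(H)$; combining the two bounds yields $\Delta_{\freeop}(\eta,H)=\Delta_{\freeset}(H)$ for every free $\eta$, so the denominator equals $\Delta_{\freeset}(H)$ and the ratio is $\ge1$ for all $\rho$ and $=1$ whenever $\rho\in\freeset$. It is convenient here to record the closed form $\Wdiff(\rho,H)=\tr[\rho H]+k_BT\ln[\tr(e^{-H/k_BT})/d]$, which is affine in $\rho$; this turns $\Delta_{\freeset}(H)$ into a support-function-type quantity and lets me verify that the denominator stays strictly positive on the chosen range of $H$.

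With these facts, property~1 is immediate in one direction: the baseline bound gives $\mathcal{M}_{\freeop}(\rho)\ge0$, and for $\rho\in\freeset$ the ratio is identically $1$, so $\mathcal{M}_{\freeop}(\rho)=0$. The substantive direction is the converse. If $\mathcal{M}_{\freeop}(\rho)=0$, then the maximum of the ratio over $H$ equals $1$; since every ratio is already $\ge1$, this forces $\Delta_{\freeop}(\rho,H)=\Delta_{\freeset}(H)=\Delta_{\freeop}(\eta,H)$ for all $\epsilon\id\le H\le\delta\id$ and any fixed free $\eta$. In particular $\Delta_{\freeop}(\eta,H)\ge\Delta_{\freeop}(\rho,H)$ on the whole range, so Theorem~\ref{Result:conversion} gives $\eta\stackrel{\freeop}{\longrightarrow}\rho$; as $\eta$ is free and allowed operations cannot generate the resource, the golden rule yields $\rho\in\freeset$.

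Property~2 is the cleanest part and uses closedness of $\freeop$ under composition. For any $\mathcal{N}\in\freeop$ one has $\{\mathcal{K}\circ\mathcal{N}:\mathcal{K}\in\freeop\}\subseteq\freeop$, hence $\Delta_{\freeop}(\mathcal{N}(\rho),H)=\max_{\mathcal{K}\in\freeop}\Delta[(\mathcal{K}\circ\mathcal{N})(\rho),H]\le\Delta_{\freeop}(\rho,H)$ for every $H$. Because the denominator $\Delta_{\freeset}(H)$ is the same $\rho$-independent quantity for $\rho$ and $\mathcal{N}(\rho)$, the ratio can only decrease at each $H$, and therefore so can its maximum; thus $\mathcal{M}_{\freeop}[\mathcal{N}(\rho)]\le\mathcal{M}_{\freeop}(\rho)$.

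I expect the main obstacles to be two technical points rather than the algebra. The first is well-definedness: one must guarantee $\Delta_{\freeset}(H)>0$ throughout $\{\epsilon\id\le H\le\delta\id\}$ so that the ratio and its logarithm make sense (note $\Delta_{\freeset}(H)$ vanishes when $H\propto\id$), which is exactly where the affine form and a suitable choice of $\epsilon,\delta$ enter. The second is the backward faithfulness step, where the full force of Theorem~\ref{Result:conversion} is needed to convert the equality $\Delta_{\freeop}(\rho,\cdot)=\Delta_{\freeop}(\eta,\cdot)$ into the convertibility $\eta\stackrel{\freeop}{\longrightarrow}\rho$ and hence, via the golden rule, into membership in $\freeset$.
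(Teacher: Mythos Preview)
Your proof is correct and mirrors the paper's almost exactly: the vanishing on free states and the monotonicity under $\freeop$ are argued identically (via state-preparation channels in $\freeop$ plus the golden rule, and via closedness under composition, respectively). For the faithfulness direction the paper simply cites Theorem~\ref{coro:state} to produce, for any $\rho\notin\freeset$, some $H$ with $\Delta_{\freeop}(\rho,H)>\max_{\eta\in\freeset}\Delta_{\freeop}(\eta,H)$, whereas you take the contrapositive and apply Theorem~\ref{Result:conversion} directly to obtain $\eta\stackrel{\freeop}{\longrightarrow}\rho$ from the equality $\Delta_{\freeop}(\rho,\cdot)=\Delta_{\freeop}(\eta,\cdot)$. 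Since Theorem~\ref{coro:state} is itself a corollary of Theorem~\ref{Result:conversion}, these are the same argument in two packagings; your route is marginally more self-contained, the paper's is a one-line citation once Theorem~\ref{coro:state} is available.

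Your well-definedness concern is legitimate, and the paper's proof does not address it either. For $H=c\id$ one has $\Delta(\rho,H)=0$ for every $\rho$, so numerator and denominator both vanish; but $c\id$ lies in $\{\epsilon\id\le H\le\delta\id\}$ for every $\epsilon\le c\le\delta$, so no choice of $0\le\epsilon<\delta$ avoids the issue and your suggested fix via ``suitable $\epsilon,\delta$'' cannot work as stated. A convention such as restricting the maximisation to $H$ not proportional to $\id$, or reading $0/0$ as $1$, is what is actually needed.
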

See Appendix E for its proof.
We thus uncover the first thermodynamic, energetic way to {\em quantify} general state resources, which is a strictly stronger notion than certification.
}

\CY{
\subsection{Implications to quantum batteries}
Before concluding this work, we now comment on the implications for quantum batteries.
For any resourceful state $\rho\notin\freeset$, by Theorem~\ref{coro:state}, there is a Hamiltonian $H>0$ achieving \mbox{$\Wdiff\left(\rho,H\right)>\max_{\eta\in\freeset}\Wdiff\left(\eta,H\right)$}.
This thus means, under a {\em fixed} charging process with $H$, a battery initially in $\rho$ can be charged {\em strictly better} than one initially in any free state $\eta\in\freeset$. 
In other words, batteries initially prepared in resourceful states can achieve larger energy storage enhancements under a fixed charging process. 
By adopting this interpretation in Theorems~\ref{Result:conversion} and~\ref{result:quantification R}, we further conclude that: 
\begin{center}
{\em The process of charging batteries can characterise state conversions and quantify general quantum resources.}
\end{center}
By approaching quantum batteries from a perspective distinct from the standard ergotropy framework~\cite{Allahverdyan2004EPL,Tirone2024SciPostPhys,Yang2023PRL}, we identify a new way to use them for characterizing general quantum resources. Extending our methods to ergotropic approaches, such as quantum battery capacity~\cite{Yang2023PRL}, lies beyond the present scope and will be pursued in future work.}

\section{Conclusions}
\CY{We have introduced a thermodynamic work extraction task that fully characterises state conversions in arbitrary resource theories (Theorem~\ref{Result:conversion}).
This framework yields the first thermodynamic universal certification class of quantum resources (Theorem~\ref{coro:state}) and provides a quantification of general resources based on work extraction (Theorem~\ref{result:quantification R}).
In the resource theory of informational non-equilibrium~\cite{Purity-review}, our result recovers the majorisation conditions for unital channel conversions and links them to work extraction (Theorem~\ref{coro}), while in entanglement theory it yields a thermodynamic interpretation of Nielsen’s theorem.
We conjecture that the same framework also reproduces thermo-majorisation in thermodynamics (see, e.g., Refs.\cite{Brandão2015, Ćwikliński2015PRL,Gour2018NC,Theurer2023NJP,Gour2022PRXQ,Lostaglio2019}), a question we leave for future work.
Finally, we have discussed implications for quantum batteries, suggesting that our methods may extend to broader notions of work extraction~\cite{Perarnau-LlobetPRX2015,Vinjanampathy2016CP,Ciampini2017npjQI,Francica2017npjQI,Andolina2019PRL,Monsel2020PRL,Opatrny2021PRL,Yang2023PRL} and may help identify quantum advantages in other thermodynamic tasks.
}

\subsection{Acknowledgements}
We thank Antonio Ac\'in, \CY{Pharnam Bakhshinezhad,} Shin-Liang Chen, \CY{Shao-Hua Hu,} Huan-Yu Ku, \CY{Patryk Lipka-Bartosik,} Matteo Lostaglio, Paul Skrzypczyk, \CY{Alexander Streltsov}, and Benjamin Stratton for fruitful discussions and comments.
C.-Y.~H. \CY{acknowledges support from} the Royal Society through Enhanced Research Expenses (\CY{NFQI}), the ERC Advanced Grant (FLQuant), \CY{and the Leverhulme Trust Early Career Fellowship (``Quantum complementarity: a novel resource for quantum science and technologies'' with Grant No.~ECF-2024-310).} This work is supported by the project PID2023-152724NA-I00, with funding from MCIU/AEI/10.13039/501100011033 and FSE+, by the project CNS2024-154818 with funding by MICIU/AEI/10.13039/501100011033, by the project RYC2021-031094-I, with funding from MCIN/AEI/10.13039/501100011033 and the European Union ‘NextGenerationEU’ PRTR fund, by the project CIPROM/2022/66 with funding by the Generalitat Valenciana, and by the Ministry of Economic Affairs and Digital Transformation of the Spanish Government through the QUANTUM ENIA Project call—QUANTUM SPAIN Project, by the European Union through the Recovery, Transformation and Resilience Plan—NextGenerationEU within the framework of the Digital Spain 2026 Agenda, and by the CSIC Interdisciplinary Thematic Platform (PTI+) on Quantum Technologies (PTI-QTEP+). This work is supported through the project CEX2023-001292-S funded by MCIU/AEI.

\newpage
\CY{\section{Appendix}}

\CY{
\subsection{Appendix A: On the convexity and compactness of $\freeset$}
When the working hypotheses hold for $\freeop$, a convex mixture of two state-preparation channels of any two free states must also be in $\freeop$.
Since allowed operations cannot output resourceful states from free inputs [i.e., Eq.~\eqref{Eq: golden rule}], this convex mixture must also be a state-preparation channel of some free state.
This implies that $\freeset$ is convex.

To see $\freeset$'s compactness, suppose we have a limit point $\eta$ of $\freeset$ in the topology induced by the one norm $\norm{\cdot}_1$, where, for an operator $P$, its {\em one norm} reads  $\norm{P}_1\coloneqq{\rm tr}(|P|)$~\cite{QIC-book}.
This means there is a countable sequence $\{\eta_n\}_{n=1}^\infty\subseteq\freeset$ achieving $\lim_{n\to\infty}\norm{\eta-\eta_n}_1=0$.
Let $\Lambda_n(\cdot)\coloneqq\eta_n{\rm tr}(\cdot)$ and $\Lambda_\infty(\cdot)\coloneqq\eta{\rm tr}(\cdot)$ be the corresponding state-preparation channels.
Now, recall that, for any Hermitian-preserving map $\Lambda$ with input system $S$, its {\em diamond norm}~\cite{Watrous-book} is defined by 
\begin{align}\label{Eq: diamon norm}
\norm{\Lambda}_\diamond\coloneqq\max\left\{\norm{(\Lambda\otimes\mathcal{I})(\rho)}_1\,|\,\text{$\rho$: state in $SA$}\right\}, 
\end{align}
where $\mathcal{I}$ is the identity map acting on an auxiliary system $A$ with the same dimension as $S$ (see, e.g., Ref.~\cite{Regula2021Quantum}).
Then,
\begin{align}
\lim_{n\to\infty}\norm{\Lambda_\infty-\Lambda_n}_\diamond&=\lim_{n\to\infty}\max_{\rho_A}\norm{(\eta-\eta_n)\otimes\rho_A}_1\nonumber\\
&=\lim_{n\to\infty}\norm{\eta-\eta_n}_1=0,
\end{align}
where $\rho_A$ is a state living in the auxiliary system $A$.
Consequently, $\Lambda_\infty$ is a limit point of $\freeop$ in the topology induced by $\norm{\cdot}_\diamond$.
By the working hypotheses, $\freeop$ is closed in this topology, meaning that $\Lambda_\infty\in\freeop$.
Since $\Lambda_\infty$ is the state-preparation channel of $\eta$, and allowed operations cannot output resourceful states from free inputs [i.e., Eq.~\eqref{Eq: golden rule}], we conclude that $\eta\in\freeset$.
Namely, $\freeset$ contains all its limit points in the topology induced by $\norm{\cdot}_1$, meaning that it is closed in this topology.
Because $\freeset$ is a subset of quantum states, it is bounded. 
Hence, it is bounded and closed in a finite-dimensional space, implying that it is indeed compact in the desired topology.
}

\subsection{Appendix B: Proof of Theorem~\ref{Result:conversion}}
\begin{proof}
\CYtwo{If $\mathcal{N}(\rho)=\sigma$ for some $\mathcal{N}\in\freeop$, then, by definition, $\Delta_{\freeop}(\rho,H)\ge\Delta_{\freeop}(\sigma,H)$ $\forall\,H$.
Hence, it suffices to prove the opposite direction. 
For $\rho$ and $\sigma$, suppose we have Eq.~\eqref{Eq:Result:conversion}.
This means, for every $\epsilon\id\le H\le\delta\id$,
\begin{align}\label{Eq:computation001}
0\le\max_{\mathcal{N}\in\freeop}\Big(\Delta\left[\mathcal{N}(\rho),H\right]-\Delta(\sigma,H)\Big).
\end{align}
Now, note that, for any Hamiltonian $H$ and state $\rho$ in a $d$-dimensional system, we have the following formula:
\begin{align}\label{Eq:WgapRelation}
&\frac{\Wdiff\left(\rho,H\right)}{k_BT\ln2} = \frac{{\rm tr}\left(H\rho\right)}{k_BT\ln2} + \log_2{\rm tr}\left(e^{-\frac{H}{k_BT}}\right) - \log_2d.
\end{align}
Using the above formula in Eq.~\eqref{Eq:computation001}, we obtain
\CY{\mbox{$
0\le\max_{\mathcal{N}\in\freeop}{\rm tr}\Big(\left[\mathcal{N}(\rho)-\sigma\right]H\Big).
$}}
Minimising over all Hamiltonians $H$ with $\epsilon\id\le H\le\delta\id$ gives
\begin{align}
0\le\min_{\epsilon\id\le H'\le\delta\id}\max_{\mathcal{N}\in\freeop}{\rm tr}\Big(\left[\mathcal{N}(\rho)-\sigma\right]H\Big).
\end{align}
Now, note that both $\freeop$ and $\{H\,|\,\epsilon\id\le H\le\delta\id\}$ are compact and convex sets.
Moreover, for fixed $\rho$ and $\sigma$, define
$
f(\mathcal{N},H)\coloneqq{\rm tr}\Big(\left[\mathcal{N}(\rho)-\sigma\right]H\Big).
$
Then, for any fixed $\mathcal{N}$, the function $f$ is linear in $H$, meaning that it is continuous as well as quasi-convex in $H$.
Similarly, for any fixed $H$, the function $f$ is also linear in $\mathcal{N}$, meaning that it is continuous as well as quasi-concave in $\mathcal{N}$.
The above implies that Sion's minimax theorem~\cite{Sion1958} can be applied to switch the order of the maximisation and minimisation, giving
\begin{align}\label{Eq:Thm1-computation001-3}
0\le\max_{\mathcal{N}\in\freeop}\min_{\epsilon\id\le H\le\delta\id}{\rm tr}\Big(\left[\mathcal{N}(\rho)-\sigma\right]H\Big).
\end{align}
Define the function
\begin{align}
f_{\rm min}(\mathcal{N})\coloneqq\min_{\epsilon\id\le H\le\delta\id}{\rm tr}\Big(\left[\mathcal{N}(\rho)-\sigma\right]H\Big).
\end{align}
For two channels $\mathcal{E},\mathcal{L}$, without loss of generality, suppose $f_{\rm min}(\mathcal{E})\ge f_{\rm min}(\mathcal{L})$.
Then, we have (let $\widetilde{H}=H/\delta$)
\begin{align}
&\left|f_{\rm min}(\mathcal{E})-f_{\rm min}(\mathcal{L})\right|\le\delta\min_{(\epsilon/\delta)\id\le \widetilde{H}\le\id}{\rm tr}\Big(\left[\mathcal{E}(\rho)-\mathcal{L}(\rho)\right]\widetilde{H}\Big)\nonumber\\
&\le\delta\max_{0\le P\le\id}{\rm tr}\Big(\left[\mathcal{E}(\rho)-\mathcal{L}(\rho)\right]P\Big)\le\frac{\delta}{2}\norm{\mathcal{E}-\mathcal{L}}_\diamond,
\end{align}
meaning that $f_{\rm min}$ is Lipschitz continuous in the topology induced by the diamond norm $\norm{\cdot}_\diamond$ [Eq.~\eqref{Eq: diamon norm}].
Here, we have used the trace distance formula $\norm{\rho-\sigma}_1/2 = \max_{0\le P\le\id}{\rm tr}[(\rho-\sigma)P]$~\cite{QIC-book}.
Since $\freeop$ is compact in the same topology 
(by working hypothesis), there is $\mathcal{N}_*\in\freeop$ achieving $f_{\rm min}(\mathcal{N}_*)=\max_{\mathcal{N}\in\freeop}f_{\rm min}(\mathcal{N})$.
Thus, Eq.~\eqref{Eq:Thm1-computation001-3} leads to
\begin{align}
0\le\min_{\epsilon\id\le H'\le\delta\id}{\rm tr}\Big(\left[\mathcal{N}_*(\rho)-\sigma\right]H\Big),
\end{align}
which is equivalent to $0\le\mathcal{N}_*(\rho)-\sigma.$
This means $\mathcal{N}_*(\rho)-\sigma$ is a trace-less positive semi-definite operator, i.e., it is zero.
Consequently, $\mathcal{N}_*(\rho)=\sigma$, which concludes the proof.
}
\end{proof}

\CY{
\subsection{Appendix C: Proof of Theorem~\ref{coro}}
\begin{proof}
Since there is always a unitary converting $\rho$ to $\rho^\downarrow$,  $\rho\stackrel{\rm MU}{\longrightarrow}\sigma$ if and only if $\rho^\downarrow\stackrel{\rm MU}{\longrightarrow}\sigma^\downarrow$, which holds if and only if (by setting $\epsilon=0$ in Theorem~\ref{Result:conversion})
\begin{align}\label{Eq: info computation001}
\Delta_{\rm MU}(\rho^\downarrow,H)\ge\Delta_{\rm MU}(\sigma^\downarrow,H)\quad\forall\,0\le H\le\delta\id,
\end{align}
where we set ``$\freeop={\rm MU}$'', the set of all mixed unitary channels.
Using Eq.~\eqref{Eq:WgapRelation}, we learn that Eq.~\eqref{Eq: info computation001} holds
if and only if
$
\max_{\mathcal{N}\in{\rm MU}}{\rm tr}\left[\mathcal{N}(\rho^\downarrow)H\right]\ge\max_{\mathcal{N}\in{\rm MU}}{\rm tr}\left[\mathcal{N}(\sigma^\downarrow)H\right].
$
One can check that $\max_{\mathcal{N}\in{\rm MU}}{\rm tr}\left[\mathcal{N}(\rho^\downarrow)H\right]=\max_U{\rm tr}\left(U\rho^\downarrow U^\dagger H\right)$ $\forall\,H\ge0$ (``$\max_U$'' denotes maximisation over all unitaries $U$).
Hence, Eq.~\eqref{Eq: info computation001} holds if and only if
\begin{align}\label{Eq: info computation002}
\max_{U}{\rm tr}\left(U\rho^\downarrow U^\dagger H\right)\ge\max_{U}{\rm tr}\left(U\sigma^\downarrow U^\dagger H\right)\quad\forall\,0\le H\le\delta\id.
\end{align}
Let $\{E_i^\downarrow(H)\}_i$ be $H$'s eigen-energies  in the {\em non-increasing} order; i.e., \mbox{$ E_i^\downarrow(H)\ge E_{i+1}^\downarrow(H)$} \mbox{$\forall\,i=0,...,d-2$}.
In Lemma~\ref{lemma}, we show that Eq.~\eqref{Eq: info computation002} holds [and so does Eq.~\eqref{Eq: info computation001}] if and only if
\begin{align}\label{Eq: passive state induced conditions}
\sum_{i=0}^{d-1}\left(\rho_i^\downarrow-\sigma_i^\downarrow\right)E_i^\downarrow(H)\ge0\quad\forall\,0\le H\le\delta\id.
\end{align}

Now, recall that $H_k\coloneqq\omega\sum_{i=0}^k\proj{i}$ and $\{\ket{i}\}_i$ is the pre-fixed basis and $\omega>0$.
Using Eq.~\eqref{Eq:WgapRelation} again, we conclude that Eq.~\eqref{Eq:Info non-equilibrium result} (i.e., Theorem~\ref{coro}'s statement) holds if and only if
\begin{align}\label{Eq: k conditions}
\omega\sum_{i=0}^{k}\left(\rho_i^\downarrow-\sigma_i^\downarrow\right)\ge0\quad\forall\,k=0,...,d-2.
\end{align}
It remains to show that Eqs.~\eqref{Eq: passive state induced conditions} and~\eqref{Eq: k conditions} are equivalent.
As Eq.~\eqref{Eq: passive state induced conditions} implies Eq.~\eqref{Eq: k conditions}, it remains to show the converse.
For a Hamiltonian $0\le H\le\delta\id$, suppose $E_N^\downarrow(H)>0$ is $H$'s smallest positive eigen-energy.
Then, we have
\begin{align}
&\sum_{i=0}^{d-1}\left(\rho_i^\downarrow-\sigma_i^\downarrow\right)E_i^\downarrow(H)\nonumber\\
&\quad\quad\ge E_1^\downarrow(H)\sum_{i=0}^1\left(\rho_i^\downarrow-\sigma_i^\downarrow\right)+\sum_{i=2}^{d-1}\left(\rho_i^\downarrow-\sigma_i^\downarrow\right)E_i^\downarrow(H)\nonumber\\
&\quad\quad\ge E_2^\downarrow(H)\sum_{i=0}^2\left(\rho_i^\downarrow-\sigma_i^\downarrow\right)+\sum_{i=3}^{d-1}\left(\rho_i^\downarrow-\sigma_i^\downarrow\right)E_i^\downarrow(H)\nonumber\\
&\quad\quad...\ge E_N^\downarrow(H)\sum_{i=0}^N\left(\rho_i^\downarrow-\sigma_i^\downarrow\right)\ge0.
\end{align}
The second line is due to $\rho_0^\downarrow-\sigma_0^\downarrow\ge0$ [setting $k=0$ in Eq.~\eqref{Eq: k conditions}] and $E_0^\downarrow\ge E_1^\downarrow$.
The third line is due to $\sum_{i=0}^1\left(\rho_i^\downarrow-\sigma_i^\downarrow\right)\ge0$ [setting $k=1$ in Eq.~\eqref{Eq: k conditions}] and $E_1^\downarrow\ge E_2^\downarrow$.
Applying this argument $N$ times gives the last line, which is non-negative by setting $k=N$ in Eq.~\eqref{Eq: k conditions}.
This shows that Eq.~\eqref{Eq: k conditions} implies Eq.~\eqref{Eq: passive state induced conditions}.
Finally, since $\omega>0$, Eq.~\eqref{Eq: k conditions} 
is equivalent to \mbox{$\rho\succ\sigma$} [defined in Eq.~\eqref{Eq:majorisation}], leading to Eq.~\eqref{Eq:Q2ndLaws info}.
\end{proof}

}

\CY{
Finally, it remains to prove the following lemma:
\begin{lemma}\label{lemma} For $\rho$ and $0\le H\le\delta\id$, we have that
\begin{align}
\max_{U}{\rm tr}\left(U\rho^\downarrow U^\dagger H\right)=\sum_{i=0}^{d-1}\rho_i^\downarrow E_i^\downarrow(H).
\end{align}
\end{lemma}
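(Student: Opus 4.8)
The plan is to recognise the lemma as the equality case of von Neumann's trace inequality and to prove it directly via a doubly-stochastic reduction, keeping the argument self-contained. First I would fix spectral decompositions. Since $\rho^\downarrow=\sum_{i=0}^{d-1}\rho_i^\downarrow\proj{i}$ is diagonal in the fixed basis $\{\ket{i}\}$, and writing $H=\sum_{j=0}^{d-1}E_j^\downarrow(H)\proj{e_j}$ for an orthonormal eigenbasis $\{\ket{e_j}\}$ ordered so that the eigen-energies are non-increasing, the objective expands as
\begin{align}
\tr\left(U\rho^\downarrow U^\dagger H\right)=\sum_{i,j=0}^{d-1}\rho_i^\downarrow E_j^\downarrow(H)\left|\bra{e_j}U\ket{i}\right|^2.
\end{align}
Next I would introduce $S_{ji}\coloneqq\left|\bra{e_j}U\ket{i}\right|^2$ and observe that, because $U$ is unitary while $\{\ket{i}\}$ and $\{\ket{e_j}\}$ are orthonormal bases, the matrix $S=(S_{ji})$ is doubly stochastic (its rows and columns each sum to one). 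The objective thereby becomes a linear functional $\sum_{i,j}\rho_i^\downarrow E_j^\downarrow(H)S_{ji}$ of $S$ ranging over the Birkhoff polytope of doubly-stochastic matrices.

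The core step is then to maximise this linear functional. Since it is linear and the Birkhoff polytope is compact and convex, the maximum is attained at an extreme point, which by the Birkhoff--von Neumann theorem is a permutation matrix; for a permutation $\pi$ the value equals $\sum_i\rho_i^\downarrow E_{\pi(i)}^\downarrow(H)$, and the rearrangement inequality shows this is largest when $\pi$ is the identity, precisely because both $\{\rho_i^\downarrow\}$ and $\{E_i^\downarrow(H)\}$ are listed in non-increasing order. This yields the upper bound $\tr\left(U\rho^\downarrow U^\dagger H\right)\le\sum_{i}\rho_i^\downarrow E_i^\downarrow(H)$ for every $U$. Finally I would exhibit a $U$ saturating it---namely any unitary mapping $\ket{i}$ to $\ket{e_i}$ for all $i$---for which $S$ reduces to the identity and the value equals $\sum_i\rho_i^\downarrow E_i^\downarrow(H)$; combining the two bounds gives the stated equality. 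Note that the constraint $0\le H\le\delta\id$ plays no role in this argument, so the identity in fact holds for an arbitrary Hermitian $H$.

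I expect the main obstacle to be the optimisation step. One must argue cleanly that relaxing from unitaries $U$ to doubly-stochastic matrices $S$ does not change the attainable value, and then combine the Birkhoff--von Neumann theorem with the rearrangement inequality. The subtlety is that the relaxation furnishes only an upper bound a priori, since not every doubly-stochastic matrix need arise as $\left|\bra{e_j}U\ket{i}\right|^2$ for some unitary $U$. It is therefore essential to verify attainability separately through the explicit aligning unitary above; this two-sided reasoning is exactly what upgrades the inequality to the desired equality.
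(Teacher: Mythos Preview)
Your proof is correct and self-contained; it is essentially a direct proof of von Neumann's trace inequality via the Birkhoff--von~Neumann theorem and the rearrangement inequality, with explicit attainability through the aligning unitary. The paper, by contrast, does not argue from scratch: it rewrites the maximisation as a minimisation via the shift $\widetilde{H}\coloneqq\delta\id-H$, obtaining $\max_U\tr(U\rho^\downarrow U^\dagger H)=\delta-\min_U\tr(U\rho^\downarrow U^\dagger\widetilde{H})$, and then invokes the known thermodynamic fact that the minimum average energy over the unitary orbit is attained at the \emph{passive state} $\widetilde{\rho}_{\rm passive}=\sum_i\rho_i^\downarrow\proj{E_i}$ (citing the ergotropy literature). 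Substituting back yields the claim. Your route is more elementary and, as you note, makes clear that the hypothesis $0\le H\le\delta\id$ is superfluous---in the paper's argument that hypothesis is used only to make $\widetilde{H}$ a bona fide (positive) Hamiltonian so that the passive-state result can be quoted off the shelf. Conversely, the paper's approach ties the lemma to the physical notion of passivity already central to the narrative, at the cost of depending on an external result whose proof is, in the end, exactly the Birkhoff/rearrangement argument you give.
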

\begin{proof}
For $0\le H\le\delta\id$, let $\widetilde{H}\coloneqq\delta\id-H$.
Then we have 
\begin{align}\label{Eq: change of Hamiltonian}
\max_{U}{\rm tr}\left(U\rho^\downarrow U^\dagger H\right)
=\delta-\min_U{\rm tr}\left(U\rho^\downarrow U^\dagger \widetilde{H}\right).
\end{align}
Let us write
$
H = \sum_{i=0}^{d-1} E_i^\downarrow(H)\proj{E_i}
$,
where $\ket{E_i}$ is $H$'s energy eigenstate with the eigen-energy $ E_i^\downarrow(H)$.
Hence, \mbox{$\widetilde{H}=\sum_{i=0}^{d-1} \left(\delta-E_i^\downarrow(H)\right)\proj{E_i}$} with its eigen-energies in the {\em non-decreasing} order.
Using the property of passive states~\cite{Allahverdyan2004EPL} (see also Eqs.~(1) to (3) in Ref.~\cite{Perarnau-LlobetPRX2015}), we obtain
\begin{align}
\min_U{\rm tr}\left(U\rho^\downarrow U^\dagger \widetilde{H}\right)={\rm tr}\left(\widetilde{\rho}_{\rm passive}\widetilde{H}\right),
\end{align}
where $\widetilde{\rho}_{\rm passive}\coloneqq\sum_{i=0}^{d-1}\rho_i^\downarrow\proj{E_i}$ is a so-called passive state with the Hamiltonian $\widetilde{H}$ (i.e., diagonal in $\widetilde{H}$'s eigenbasis with non-increasing eigenvalues with respect to $\widetilde{H}$'s eigen-energies).
Together with Eq.~\eqref{Eq: change of Hamiltonian}, we have
$
\max_{U}{\rm tr}\left(U\rho^\downarrow U^\dagger H\right)
={\rm tr}\left(\widetilde{\rho}_{\rm passive}H\right)=\sum_{i=0}^{d-1}\rho_i^\downarrow E_i^\downarrow(H).
$
\end{proof}

}

\CY{
\subsection{Appendix D: Proof of Theorem~\ref{coro:state}}
\begin{proof}
Define the set
\begin{align}
\mathcal{O}_{R}^{\rm min}
\coloneqq{\rm Cov}\{\text{state-preparation of $\freeset$ \& identity}\},
\end{align}
which is the set containing the identity channel, all state-preparation channels that prepare some free state, and their convex mixture.
Since, in our approach, $\freeset$ is always convex and compact (see Appendix A), $\mathcal{O}_{R}^{\rm min}$ is a set satisfying the working hypothesis [to see its compactness, it suffices to consider the continuous function $f(p,\eta)\coloneqq p\eta{\rm tr}(\cdot)+(1-p)\mathcal{I}(\cdot)$ and note that it maps the compact set $[0,1]\times\freeset$ onto $\mathcal{O}_{R}^{\rm min}$].
Consider energy scales \mbox{$0<\epsilon<\delta$} and a given $\rho\notin\freeset$.
For a given free state $\eta_*\in\freeset$, since no channel in $\mathcal{O}_R^{\rm min}$ can convert it to $\rho$, Theorem~\ref{Result:conversion} implies that there is a Hamiltonian $\epsilon\id\le H\le\delta\id$ achieving
$\Delta_{\mathcal{O}_R^{\rm min}}(\rho,H)>\Delta_{\mathcal{O}_R^{\rm min}}(\eta_*,H)$.
Now, the structure of $\mathcal{O}_R^{\rm min}$ ensures
$
\Delta_{\mathcal{O}_R^{\rm min}}(\eta_*,H)=\max_{\eta\in\freeset}\Delta(\eta,H).
$
Hence,
\begin{align}\label{Eq: O min strict ineq}
\Delta_{\mathcal{O}_R^{\rm min}}(\rho,H)>\max_{\eta\in\freeset}\Delta(\eta,H).
\end{align}
Finally, using Eq.~\eqref{Eq:WgapRelation}, we have
\begin{align}
&\Delta_{\mathcal{O}_R^{\rm min}}(\rho,H)=\max_{\substack{0\le p\le1;\\\eta\in\freeset}}\Big(p\Delta\left(\rho,H\right)+(1-p)\Delta\left(\eta,H\right)\Big)\nonumber\\
&\quad\quad=\max_{0\le p\le1}\Big(p\Delta\left(\rho,H\right)+(1-p)\max_{\eta\in\freeset}\Delta(\eta,H)\Big),
\end{align}
Suppose $p_*$ achieves the above maximum.
Combining with Eq.~\eqref{Eq: O min strict ineq}, we obtain
$
p_*\max_{\eta\in\freeset}\Delta(\eta,H)<p_*\Delta\left(\rho,H\right).
$
This thus means we must have $p_*>0$.
Dividing $p_*$ on both sides concludes the proof.
\end{proof}
}

\CY{As a remark, such results are usually proved by the hyperplane separation theorem (see, e.g., Ref.~\cite {Takagi2019}). Here, we prove it by Sion's minimax theorem (via Theorem~\ref{Result:conversion}).}

\CY{
\subsection{Appendix E: Proof of Theorem~\ref{result:quantification R}}
\begin{proof}
Since $\freeop$ contains all state-preparation channels of free states (from the working hypothesis), Eq.~\eqref{Eq:Def Delta_OR} implies that $\Delta_{\freeop}(\kappa,H)=\max_{\eta\in\freeset}\Delta(\eta,H)$
$\forall\,\epsilon\id\le H\le\delta\id$ and $\forall\,\kappa\in\freeset$, meaning that $\mathcal{M}_{\freeop}(\eta)=0$ $\forall\,\eta\in\freeset$.
Also, for $\rho\notin\freeset$, Theorem~\ref{coro:state} implies that there is $\epsilon\id\le H\le\delta\id$ achieving
$
\max_{\eta\in\freeset}\Delta_{\freeop}(\eta,H)<\Delta_{\freeop}(\rho,H)
$.
Hence, $\mathcal{M}_{\freeop}(\rho)>0$ $\forall\,\rho\notin\freeset$.
Finally, to show that $\mathcal{M}_{\freeop}$ is non-increasing under allowed operations, note that
$
\Delta_{\freeop}[\mathcal{N}(\rho),H]\le\Delta_{\freeop}(\rho,H)
$
$\forall\,\rho$ and $\forall\,\mathcal{N}\in\freeop$.
Hence,
\begin{align}
&\mathcal{M}_{\freeop}[\mathcal{N}(\rho)]\coloneqq\log_2\max_{\epsilon\id\le H\le\delta\id}\frac{\Delta_{\freeop}[\mathcal{N}(\rho),H]}{\max_{\eta\in\freeset}\Delta_{\freeop}(\eta,H)}\nonumber\\
&\le\log_2\max_{\epsilon\id\le H\le\delta\id}\frac{\Delta_{\freeop}(\rho,H)}{\max_{\eta\in\freeset}\Delta_{\freeop}(\eta,H)}=\mathcal{M}_{\freeop}(\rho),
\end{align}
which concludes the proof.
\end{proof}
}

\bibliography{Ref.bib}

\begin{thebibliography}{98}%
\makeatletter
\providecommand \@ifxundefined [1]{%
 \@ifx{#1\undefined}
}%
\providecommand \@ifnum [1]{%
 \ifnum #1\expandafter \@firstoftwo
 \else \expandafter \@secondoftwo
 \fi
}%
\providecommand \@ifx [1]{%
 \ifx #1\expandafter \@firstoftwo
 \else \expandafter \@secondoftwo
 \fi
}%
\providecommand \natexlab [1]{#1}%
\providecommand \enquote  [1]{``#1''}%
\providecommand \bibnamefont  [1]{#1}%
\providecommand \bibfnamefont [1]{#1}%
\providecommand \citenamefont [1]{#1}%
\providecommand \href@noop [0]{\@secondoftwo}%
\providecommand \href [0]{\begingroup \@sanitize@url \@href}%
\providecommand \@href[1]{\@@startlink{#1}\@@href}%
\providecommand \@@href[1]{\endgroup#1\@@endlink}%
\providecommand \@sanitize@url [0]{\catcode `\\12\catcode `\$12\catcode `\&12\catcode `\#12\catcode `\^12\catcode `\_12\catcode `\%12\relax}%
\providecommand \@@startlink[1]{}%
\providecommand \@@endlink[0]{}%
\providecommand \url  [0]{\begingroup\@sanitize@url \@url }%
\providecommand \@url [1]{\endgroup\@href {#1}{\urlprefix }}%
\providecommand \urlprefix  [0]{URL }%
\providecommand \Eprint [0]{\href }%
\providecommand \doibase [0]{https://doi.org/}%
\providecommand \selectlanguage [0]{\@gobble}%
\providecommand \bibinfo  [0]{\@secondoftwo}%
\providecommand \bibfield  [0]{\@secondoftwo}%
\providecommand \translation [1]{[#1]}%
\providecommand \BibitemOpen [0]{}%
\providecommand \bibitemStop [0]{}%
\providecommand \bibitemNoStop [0]{.\EOS\space}%
\providecommand \EOS [0]{\spacefactor3000\relax}%
\providecommand \BibitemShut  [1]{\csname bibitem#1\endcsname}%
\let\auto@bib@innerbib\@empty
\bibitem [{\citenamefont {Kuroiwa}\ \emph {et~al.}(2024{\natexlab{a}})\citenamefont {Kuroiwa}, \citenamefont {Takagi}, \citenamefont {Adesso},\ and\ \citenamefont {Yamasaki}}]{Kuroiwa2024PRL}%
  \BibitemOpen
  \bibfield  {author} {\bibinfo {author} {\bibfnamefont {K.}~\bibnamefont {Kuroiwa}}, \bibinfo {author} {\bibfnamefont {R.}~\bibnamefont {Takagi}}, \bibinfo {author} {\bibfnamefont {G.}~\bibnamefont {Adesso}},\ and\ \bibinfo {author} {\bibfnamefont {H.}~\bibnamefont {Yamasaki}},\ }\bibfield  {title} {\bibinfo {title} {Every quantum helps: Operational advantage of quantum resources beyond convexity},\ }\href {https://doi.org/10.1103/PhysRevLett.132.150201} {\bibfield  {journal} {\bibinfo  {journal} {Phys. Rev. Lett.}\ }\textbf {\bibinfo {volume} {132}},\ \bibinfo {pages} {150201} (\bibinfo {year} {2024}{\natexlab{a}})}\BibitemShut {NoStop}%
\bibitem [{\citenamefont {Kuroiwa}\ \emph {et~al.}(2024{\natexlab{b}})\citenamefont {Kuroiwa}, \citenamefont {Takagi}, \citenamefont {Adesso},\ and\ \citenamefont {Yamasaki}}]{Kuroiwa2024PRA}%
  \BibitemOpen
  \bibfield  {author} {\bibinfo {author} {\bibfnamefont {K.}~\bibnamefont {Kuroiwa}}, \bibinfo {author} {\bibfnamefont {R.}~\bibnamefont {Takagi}}, \bibinfo {author} {\bibfnamefont {G.}~\bibnamefont {Adesso}},\ and\ \bibinfo {author} {\bibfnamefont {H.}~\bibnamefont {Yamasaki}},\ }\bibfield  {title} {\bibinfo {title} {Robustness- and weight-based resource measures without convexity restriction: Multicopy witness and operational advantage in static and dynamical quantum resource theories},\ }\href {https://doi.org/10.1103/PhysRevA.109.042403} {\bibfield  {journal} {\bibinfo  {journal} {Phys. Rev. A}\ }\textbf {\bibinfo {volume} {109}},\ \bibinfo {pages} {042403} (\bibinfo {year} {2024}{\natexlab{b}})}\BibitemShut {NoStop}%
\bibitem [{\citenamefont {Meier}\ and\ \citenamefont {Yamasaki}(2025)}]{Meier2025PRXEnergy}%
  \BibitemOpen
  \bibfield  {author} {\bibinfo {author} {\bibfnamefont {F.}~\bibnamefont {Meier}}\ and\ \bibinfo {author} {\bibfnamefont {H.}~\bibnamefont {Yamasaki}},\ }\bibfield  {title} {\bibinfo {title} {Energy-consumption advantage of quantum computation},\ }\href {https://doi.org/10.1103/PRXEnergy.4.023008} {\bibfield  {journal} {\bibinfo  {journal} {PRX Energy}\ }\textbf {\bibinfo {volume} {4}},\ \bibinfo {pages} {023008} (\bibinfo {year} {2025})}\BibitemShut {NoStop}%
\bibitem [{\citenamefont {Bennett}\ \emph {et~al.}(1993)\citenamefont {Bennett}, \citenamefont {Brassard}, \citenamefont {Cr\'epeau}, \citenamefont {Jozsa}, \citenamefont {Peres},\ and\ \citenamefont {Wootters}}]{Bennett93}%
  \BibitemOpen
  \bibfield  {author} {\bibinfo {author} {\bibfnamefont {C.~H.}\ \bibnamefont {Bennett}}, \bibinfo {author} {\bibfnamefont {G.}~\bibnamefont {Brassard}}, \bibinfo {author} {\bibfnamefont {C.}~\bibnamefont {Cr\'epeau}}, \bibinfo {author} {\bibfnamefont {R.}~\bibnamefont {Jozsa}}, \bibinfo {author} {\bibfnamefont {A.}~\bibnamefont {Peres}},\ and\ \bibinfo {author} {\bibfnamefont {W.~K.}\ \bibnamefont {Wootters}},\ }\bibfield  {title} {\bibinfo {title} {Teleporting an unknown quantum state via dual classical and {E}instein-{P}odolsky-{R}osen channels},\ }\href {https://doi.org/10.1103/PhysRevLett.70.1895} {\bibfield  {journal} {\bibinfo  {journal} {Phys. Rev. Lett.}\ }\textbf {\bibinfo {volume} {70}},\ \bibinfo {pages} {1895} (\bibinfo {year} {1993})}\BibitemShut {NoStop}%
\bibitem [{\citenamefont {Bennett}\ and\ \citenamefont {Wiesner}(1992)}]{Bennett92}%
  \BibitemOpen
  \bibfield  {author} {\bibinfo {author} {\bibfnamefont {C.~H.}\ \bibnamefont {Bennett}}\ and\ \bibinfo {author} {\bibfnamefont {S.~J.}\ \bibnamefont {Wiesner}},\ }\bibfield  {title} {\bibinfo {title} {Communication via one- and two-particle operators on {E}instein-{P}odolsky-{R}osen states},\ }\href {https://doi.org/10.1103/PhysRevLett.69.2881} {\bibfield  {journal} {\bibinfo  {journal} {Phys. Rev. Lett.}\ }\textbf {\bibinfo {volume} {69}},\ \bibinfo {pages} {2881} (\bibinfo {year} {1992})}\BibitemShut {NoStop}%
\bibitem [{\citenamefont {Pezz\`e}\ \emph {et~al.}(2018)\citenamefont {Pezz\`e}, \citenamefont {Smerzi}, \citenamefont {Oberthaler}, \citenamefont {Schmied},\ and\ \citenamefont {Treutlein}}]{RevModPhys.90.035005}%
  \BibitemOpen
  \bibfield  {author} {\bibinfo {author} {\bibfnamefont {L.}~\bibnamefont {Pezz\`e}}, \bibinfo {author} {\bibfnamefont {A.}~\bibnamefont {Smerzi}}, \bibinfo {author} {\bibfnamefont {M.~K.}\ \bibnamefont {Oberthaler}}, \bibinfo {author} {\bibfnamefont {R.}~\bibnamefont {Schmied}},\ and\ \bibinfo {author} {\bibfnamefont {P.}~\bibnamefont {Treutlein}},\ }\bibfield  {title} {\bibinfo {title} {Quantum metrology with nonclassical states of atomic ensembles},\ }\href {https://doi.org/10.1103/RevModPhys.90.035005} {\bibfield  {journal} {\bibinfo  {journal} {Rev. Mod. Phys.}\ }\textbf {\bibinfo {volume} {90}},\ \bibinfo {pages} {035005} (\bibinfo {year} {2018})}\BibitemShut {NoStop}%
\bibitem [{\citenamefont {Ren}\ \emph {et~al.}(2021)\citenamefont {Ren}, \citenamefont {Li}, \citenamefont {Smerzi},\ and\ \citenamefont {Gessner}}]{PhysRevLett.126.080502}%
  \BibitemOpen
  \bibfield  {author} {\bibinfo {author} {\bibfnamefont {Z.}~\bibnamefont {Ren}}, \bibinfo {author} {\bibfnamefont {W.}~\bibnamefont {Li}}, \bibinfo {author} {\bibfnamefont {A.}~\bibnamefont {Smerzi}},\ and\ \bibinfo {author} {\bibfnamefont {M.}~\bibnamefont {Gessner}},\ }\bibfield  {title} {\bibinfo {title} {Metrological detection of multipartite entanglement from young diagrams},\ }\href {https://doi.org/10.1103/PhysRevLett.126.080502} {\bibfield  {journal} {\bibinfo  {journal} {Phys. Rev. Lett.}\ }\textbf {\bibinfo {volume} {126}},\ \bibinfo {pages} {080502} (\bibinfo {year} {2021})}\BibitemShut {NoStop}%
\bibitem [{\citenamefont {Horodecki}\ \emph {et~al.}(2009)\citenamefont {Horodecki}, \citenamefont {Horodecki}, \citenamefont {Horodecki},\ and\ \citenamefont {Horodecki}}]{HorodeckiRMP}%
  \BibitemOpen
  \bibfield  {author} {\bibinfo {author} {\bibfnamefont {R.}~\bibnamefont {Horodecki}}, \bibinfo {author} {\bibfnamefont {P.}~\bibnamefont {Horodecki}}, \bibinfo {author} {\bibfnamefont {M.}~\bibnamefont {Horodecki}},\ and\ \bibinfo {author} {\bibfnamefont {K.}~\bibnamefont {Horodecki}},\ }\bibfield  {title} {\bibinfo {title} {Quantum entanglement},\ }\href {https://doi.org/10.1103/RevModPhys.81.865} {\bibfield  {journal} {\bibinfo  {journal} {Rev. Mod. Phys.}\ }\textbf {\bibinfo {volume} {81}},\ \bibinfo {pages} {865} (\bibinfo {year} {2009})}\BibitemShut {NoStop}%
\bibitem [{\citenamefont {Brunner}\ \emph {et~al.}(2014)\citenamefont {Brunner}, \citenamefont {Cavalcanti}, \citenamefont {Pironio}, \citenamefont {Scarani},\ and\ \citenamefont {Wehner}}]{Brunner2014RMP}%
  \BibitemOpen
  \bibfield  {author} {\bibinfo {author} {\bibfnamefont {N.}~\bibnamefont {Brunner}}, \bibinfo {author} {\bibfnamefont {D.}~\bibnamefont {Cavalcanti}}, \bibinfo {author} {\bibfnamefont {S.}~\bibnamefont {Pironio}}, \bibinfo {author} {\bibfnamefont {V.}~\bibnamefont {Scarani}},\ and\ \bibinfo {author} {\bibfnamefont {S.}~\bibnamefont {Wehner}},\ }\bibfield  {title} {\bibinfo {title} {Bell nonlocality},\ }\href {https://doi.org/10.1103/RevModPhys.86.419} {\bibfield  {journal} {\bibinfo  {journal} {Rev. Mod. Phys.}\ }\textbf {\bibinfo {volume} {86}},\ \bibinfo {pages} {419} (\bibinfo {year} {2014})}\BibitemShut {NoStop}%
\bibitem [{\citenamefont {Ac\'{\i}n}\ \emph {et~al.}(2007)\citenamefont {Ac\'{\i}n}, \citenamefont {Brunner}, \citenamefont {Gisin}, \citenamefont {Massar}, \citenamefont {Pironio},\ and\ \citenamefont {Scarani}}]{Acin2007PRL}%
  \BibitemOpen
  \bibfield  {author} {\bibinfo {author} {\bibfnamefont {A.}~\bibnamefont {Ac\'{\i}n}}, \bibinfo {author} {\bibfnamefont {N.}~\bibnamefont {Brunner}}, \bibinfo {author} {\bibfnamefont {N.}~\bibnamefont {Gisin}}, \bibinfo {author} {\bibfnamefont {S.}~\bibnamefont {Massar}}, \bibinfo {author} {\bibfnamefont {S.}~\bibnamefont {Pironio}},\ and\ \bibinfo {author} {\bibfnamefont {V.}~\bibnamefont {Scarani}},\ }\bibfield  {title} {\bibinfo {title} {Device-independent security of quantum cryptography against collective attacks},\ }\href {https://doi.org/10.1103/PhysRevLett.98.230501} {\bibfield  {journal} {\bibinfo  {journal} {Phys. Rev. Lett.}\ }\textbf {\bibinfo {volume} {98}},\ \bibinfo {pages} {230501} (\bibinfo {year} {2007})}\BibitemShut {NoStop}%
\bibitem [{\citenamefont {Uola}\ \emph {et~al.}(2020{\natexlab{a}})\citenamefont {Uola}, \citenamefont {Costa}, \citenamefont {Nguyen},\ and\ \citenamefont {G\"uhne}}]{UolaRMP2020}%
  \BibitemOpen
  \bibfield  {author} {\bibinfo {author} {\bibfnamefont {R.}~\bibnamefont {Uola}}, \bibinfo {author} {\bibfnamefont {A.~C.~S.}\ \bibnamefont {Costa}}, \bibinfo {author} {\bibfnamefont {H.~C.}\ \bibnamefont {Nguyen}},\ and\ \bibinfo {author} {\bibfnamefont {O.}~\bibnamefont {G\"uhne}},\ }\bibfield  {title} {\bibinfo {title} {Quantum steering},\ }\href {https://doi.org/10.1103/RevModPhys.92.015001} {\bibfield  {journal} {\bibinfo  {journal} {Rev. Mod. Phys.}\ }\textbf {\bibinfo {volume} {92}},\ \bibinfo {pages} {015001} (\bibinfo {year} {2020}{\natexlab{a}})}\BibitemShut {NoStop}%
\bibitem [{\citenamefont {Cavalcanti}\ and\ \citenamefont {Skrzypczyk}(2016)}]{Cavalcanti2016}%
  \BibitemOpen
  \bibfield  {author} {\bibinfo {author} {\bibfnamefont {D.}~\bibnamefont {Cavalcanti}}\ and\ \bibinfo {author} {\bibfnamefont {P.}~\bibnamefont {Skrzypczyk}},\ }\bibfield  {title} {\bibinfo {title} {Quantum steering: a review with focus on semidefinite programming},\ }\href {https://doi.org/10.1088/1361-6633/80/2/024001} {\bibfield  {journal} {\bibinfo  {journal} {Rep. Prog. Phys.}\ }\textbf {\bibinfo {volume} {80}},\ \bibinfo {pages} {024001} (\bibinfo {year} {2016})}\BibitemShut {NoStop}%
\bibitem [{\citenamefont {Branciard}\ \emph {et~al.}(2012)\citenamefont {Branciard}, \citenamefont {Cavalcanti}, \citenamefont {Walborn}, \citenamefont {Scarani},\ and\ \citenamefont {Wiseman}}]{Branciard2012PRA}%
  \BibitemOpen
  \bibfield  {author} {\bibinfo {author} {\bibfnamefont {C.}~\bibnamefont {Branciard}}, \bibinfo {author} {\bibfnamefont {E.~G.}\ \bibnamefont {Cavalcanti}}, \bibinfo {author} {\bibfnamefont {S.~P.}\ \bibnamefont {Walborn}}, \bibinfo {author} {\bibfnamefont {V.}~\bibnamefont {Scarani}},\ and\ \bibinfo {author} {\bibfnamefont {H.~M.}\ \bibnamefont {Wiseman}},\ }\bibfield  {title} {\bibinfo {title} {One-sided device-independent quantum key distribution: Security, feasibility, and the connection with steering},\ }\href {https://doi.org/10.1103/PhysRevA.85.010301} {\bibfield  {journal} {\bibinfo  {journal} {Phys. Rev. A}\ }\textbf {\bibinfo {volume} {85}},\ \bibinfo {pages} {010301(R)} (\bibinfo {year} {2012})}\BibitemShut {NoStop}%
\bibitem [{\citenamefont {G\"uhne}\ \emph {et~al.}(2023)\citenamefont {G\"uhne}, \citenamefont {Haapasalo}, \citenamefont {Kraft}, \citenamefont {Pellonp\"a\"a},\ and\ \citenamefont {Uola}}]{Otfried2021Rev}%
  \BibitemOpen
  \bibfield  {author} {\bibinfo {author} {\bibfnamefont {O.}~\bibnamefont {G\"uhne}}, \bibinfo {author} {\bibfnamefont {E.}~\bibnamefont {Haapasalo}}, \bibinfo {author} {\bibfnamefont {T.}~\bibnamefont {Kraft}}, \bibinfo {author} {\bibfnamefont {J.-P.}\ \bibnamefont {Pellonp\"a\"a}},\ and\ \bibinfo {author} {\bibfnamefont {R.}~\bibnamefont {Uola}},\ }\bibfield  {title} {\bibinfo {title} {Colloquium: Incompatible measurements in quantum information science},\ }\href {https://doi.org/10.1103/RevModPhys.95.011003} {\bibfield  {journal} {\bibinfo  {journal} {Rev. Mod. Phys.}\ }\textbf {\bibinfo {volume} {95}},\ \bibinfo {pages} {011003} (\bibinfo {year} {2023})}\BibitemShut {NoStop}%
\bibitem [{\citenamefont {Hsieh}\ \emph {et~al.}(2023)\citenamefont {Hsieh}, \citenamefont {Uola},\ and\ \citenamefont {Skrzypczyk}}]{Hsieh2023}%
  \BibitemOpen
  \bibfield  {author} {\bibinfo {author} {\bibfnamefont {C.-Y.}\ \bibnamefont {Hsieh}}, \bibinfo {author} {\bibfnamefont {R.}~\bibnamefont {Uola}},\ and\ \bibinfo {author} {\bibfnamefont {P.}~\bibnamefont {Skrzypczyk}},\ }\href@noop {} {\bibinfo {title} {Quantum complementarity: A novel resource for unambiguous exclusion and encryption}} (\bibinfo {year} {2023}),\ \Eprint {https://arxiv.org/abs/2309.11968} {arXiv:2309.11968 [quant-ph]} \BibitemShut {NoStop}%
\bibitem [{\citenamefont {Rosset}\ \emph {et~al.}(2018)\citenamefont {Rosset}, \citenamefont {Buscemi},\ and\ \citenamefont {Liang}}]{Rosset2018PRX}%
  \BibitemOpen
  \bibfield  {author} {\bibinfo {author} {\bibfnamefont {D.}~\bibnamefont {Rosset}}, \bibinfo {author} {\bibfnamefont {F.}~\bibnamefont {Buscemi}},\ and\ \bibinfo {author} {\bibfnamefont {Y.-C.}\ \bibnamefont {Liang}},\ }\bibfield  {title} {\bibinfo {title} {Resource theory of quantum memories and their faithful verification with minimal assumptions},\ }\href {https://doi.org/10.1103/PhysRevX.8.021033} {\bibfield  {journal} {\bibinfo  {journal} {Phys. Rev. X}\ }\textbf {\bibinfo {volume} {8}},\ \bibinfo {pages} {021033} (\bibinfo {year} {2018})}\BibitemShut {NoStop}%
\bibitem [{\citenamefont {Yuan}\ \emph {et~al.}(2021)\citenamefont {Yuan}, \citenamefont {Liu}, \citenamefont {Zhao}, \citenamefont {Regula}, \citenamefont {Thompson},\ and\ \citenamefont {Gu}}]{Yuan2021npjQI}%
  \BibitemOpen
  \bibfield  {author} {\bibinfo {author} {\bibfnamefont {X.}~\bibnamefont {Yuan}}, \bibinfo {author} {\bibfnamefont {Y.}~\bibnamefont {Liu}}, \bibinfo {author} {\bibfnamefont {Q.}~\bibnamefont {Zhao}}, \bibinfo {author} {\bibfnamefont {B.}~\bibnamefont {Regula}}, \bibinfo {author} {\bibfnamefont {J.}~\bibnamefont {Thompson}},\ and\ \bibinfo {author} {\bibfnamefont {M.}~\bibnamefont {Gu}},\ }\bibfield  {title} {\bibinfo {title} {Universal and operational benchmarking of quantum memories},\ }\href {https://doi.org/10.1038/s41534-021-00444-9} {\bibfield  {journal} {\bibinfo  {journal} {npj Quantum Inf.}\ }\textbf {\bibinfo {volume} {7}},\ \bibinfo {pages} {108} (\bibinfo {year} {2021})}\BibitemShut {NoStop}%
\bibitem [{\citenamefont {Ku}\ \emph {et~al.}(2022)\citenamefont {Ku}, \citenamefont {Kadlec}, \citenamefont {\ifmmode~\check{C}\else \v{C}\fi{}ernoch}, \citenamefont {Quintino}, \citenamefont {Zhou}, \citenamefont {Lemr}, \citenamefont {Lambert}, \citenamefont {Miranowicz}, \citenamefont {Chen}, \citenamefont {Nori},\ and\ \citenamefont {Chen}}]{Ku2022PRXQ}%
  \BibitemOpen
  \bibfield  {author} {\bibinfo {author} {\bibfnamefont {H.-Y.}\ \bibnamefont {Ku}}, \bibinfo {author} {\bibfnamefont {J.}~\bibnamefont {Kadlec}}, \bibinfo {author} {\bibfnamefont {A.}~\bibnamefont {\ifmmode~\check{C}\else \v{C}\fi{}ernoch}}, \bibinfo {author} {\bibfnamefont {M.~T.}\ \bibnamefont {Quintino}}, \bibinfo {author} {\bibfnamefont {W.}~\bibnamefont {Zhou}}, \bibinfo {author} {\bibfnamefont {K.}~\bibnamefont {Lemr}}, \bibinfo {author} {\bibfnamefont {N.}~\bibnamefont {Lambert}}, \bibinfo {author} {\bibfnamefont {A.}~\bibnamefont {Miranowicz}}, \bibinfo {author} {\bibfnamefont {S.-L.}\ \bibnamefont {Chen}}, \bibinfo {author} {\bibfnamefont {F.}~\bibnamefont {Nori}},\ and\ \bibinfo {author} {\bibfnamefont {Y.-N.}\ \bibnamefont {Chen}},\ }\bibfield  {title} {\bibinfo {title} {Quantifying quantumness of channels without entanglement},\ }\href {https://doi.org/10.1103/PRXQuantum.3.020338} {\bibfield  {journal} {\bibinfo  {journal} {PRX Quantum}\ }\textbf {\bibinfo {volume} {3}},\ \bibinfo {pages}
  {020338} (\bibinfo {year} {2022})}\BibitemShut {NoStop}%
\bibitem [{\citenamefont {Vieira}\ \emph {et~al.}(2024)\citenamefont {Vieira}, \citenamefont {Ku},\ and\ \citenamefont {Budroni}}]{Vieira2024}%
  \BibitemOpen
  \bibfield  {author} {\bibinfo {author} {\bibfnamefont {L.~B.}\ \bibnamefont {Vieira}}, \bibinfo {author} {\bibfnamefont {H.-Y.}\ \bibnamefont {Ku}},\ and\ \bibinfo {author} {\bibfnamefont {C.}~\bibnamefont {Budroni}},\ }\href@noop {} {\bibinfo {title} {Entanglement-breaking channels are a quantum memory resource}} (\bibinfo {year} {2024}),\ \Eprint {https://arxiv.org/abs/2402.16789} {arXiv:2402.16789} \BibitemShut {NoStop}%
\bibitem [{\citenamefont {Abiuso}(2023)}]{Abiuso2024}%
  \BibitemOpen
  \bibfield  {author} {\bibinfo {author} {\bibfnamefont {P.}~\bibnamefont {Abiuso}},\ }\bibfield  {title} {\bibinfo {title} {Verification of continuous-variable quantum memories},\ }\href {https://doi.org/10.1088/2058-9565/ad097c} {\bibfield  {journal} {\bibinfo  {journal} {Quantum Sci. Technol.}\ }\textbf {\bibinfo {volume} {9}},\ \bibinfo {pages} {01LT02} (\bibinfo {year} {2023})}\BibitemShut {NoStop}%
\bibitem [{\citenamefont {Narasimhachar}\ \emph {et~al.}(2019)\citenamefont {Narasimhachar}, \citenamefont {Thompson}, \citenamefont {Ma}, \citenamefont {Gour},\ and\ \citenamefont {Gu}}]{NarasimhacharPRL2019}%
  \BibitemOpen
  \bibfield  {author} {\bibinfo {author} {\bibfnamefont {V.}~\bibnamefont {Narasimhachar}}, \bibinfo {author} {\bibfnamefont {J.}~\bibnamefont {Thompson}}, \bibinfo {author} {\bibfnamefont {J.}~\bibnamefont {Ma}}, \bibinfo {author} {\bibfnamefont {G.}~\bibnamefont {Gour}},\ and\ \bibinfo {author} {\bibfnamefont {M.}~\bibnamefont {Gu}},\ }\bibfield  {title} {\bibinfo {title} {Quantifying memory capacity as a quantum thermodynamic resource},\ }\href {https://doi.org/10.1103/PhysRevLett.122.060601} {\bibfield  {journal} {\bibinfo  {journal} {Phys. Rev. Lett.}\ }\textbf {\bibinfo {volume} {122}},\ \bibinfo {pages} {060601} (\bibinfo {year} {2019})}\BibitemShut {NoStop}%
\bibitem [{\citenamefont {Hsieh}\ \emph {et~al.}(2025)\citenamefont {Hsieh}, \citenamefont {Stratton}, \citenamefont {Wu},\ and\ \citenamefont {Ku}}]{Hsieh2025PRA-3}%
  \BibitemOpen
  \bibfield  {author} {\bibinfo {author} {\bibfnamefont {C.-Y.}\ \bibnamefont {Hsieh}}, \bibinfo {author} {\bibfnamefont {B.}~\bibnamefont {Stratton}}, \bibinfo {author} {\bibfnamefont {C.-H.}\ \bibnamefont {Wu}},\ and\ \bibinfo {author} {\bibfnamefont {H.-Y.}\ \bibnamefont {Ku}},\ }\bibfield  {title} {\bibinfo {title} {Dynamical resource theory of incompatibility preservability},\ }\href {https://doi.org/10.1103/PhysRevA.111.022422} {\bibfield  {journal} {\bibinfo  {journal} {Phys. Rev. A}\ }\textbf {\bibinfo {volume} {111}},\ \bibinfo {pages} {022422} (\bibinfo {year} {2025})}\BibitemShut {NoStop}%
\bibitem [{\citenamefont {Takagi}\ \emph {et~al.}(2020)\citenamefont {Takagi}, \citenamefont {Wang},\ and\ \citenamefont {Hayashi}}]{Takagi2020PRL}%
  \BibitemOpen
  \bibfield  {author} {\bibinfo {author} {\bibfnamefont {R.}~\bibnamefont {Takagi}}, \bibinfo {author} {\bibfnamefont {K.}~\bibnamefont {Wang}},\ and\ \bibinfo {author} {\bibfnamefont {M.}~\bibnamefont {Hayashi}},\ }\bibfield  {title} {\bibinfo {title} {Application of the resource theory of channels to communication scenarios},\ }\href {https://doi.org/10.1103/PhysRevLett.124.120502} {\bibfield  {journal} {\bibinfo  {journal} {Phys. Rev. Lett.}\ }\textbf {\bibinfo {volume} {124}},\ \bibinfo {pages} {120502} (\bibinfo {year} {2020})}\BibitemShut {NoStop}%
\bibitem [{\citenamefont {Hsieh}(2021)}]{Hsieh2021PRXQ}%
  \BibitemOpen
  \bibfield  {author} {\bibinfo {author} {\bibfnamefont {C.-Y.}\ \bibnamefont {Hsieh}},\ }\bibfield  {title} {\bibinfo {title} {Communication, dynamical resource theory, and thermodynamics},\ }\href {https://doi.org/10.1103/PRXQuantum.2.020318} {\bibfield  {journal} {\bibinfo  {journal} {PRX Quantum}\ }\textbf {\bibinfo {volume} {2}},\ \bibinfo {pages} {020318} (\bibinfo {year} {2021})}\BibitemShut {NoStop}%
\bibitem [{\citenamefont {Hsieh}(2025{\natexlab{a}})}]{Hsieh2025PRL}%
  \BibitemOpen
  \bibfield  {author} {\bibinfo {author} {\bibfnamefont {C.-Y.}\ \bibnamefont {Hsieh}},\ }\bibfield  {title} {\bibinfo {title} {Dynamical {L}andauer principle: Quantifying information transmission by thermodynamics},\ }\href {https://doi.org/10.1103/PhysRevLett.134.050404} {\bibfield  {journal} {\bibinfo  {journal} {Phys. Rev. Lett.}\ }\textbf {\bibinfo {volume} {134}},\ \bibinfo {pages} {050404} (\bibinfo {year} {2025}{\natexlab{a}})}\BibitemShut {NoStop}%
\bibitem [{\citenamefont {Hsieh}(2025{\natexlab{b}})}]{Hsieh2025PRA}%
  \BibitemOpen
  \bibfield  {author} {\bibinfo {author} {\bibfnamefont {C.-Y.}\ \bibnamefont {Hsieh}},\ }\bibfield  {title} {\bibinfo {title} {Dynamical {L}andauer principle: Thermodynamic criteria of transmitting classical information},\ }\href {https://doi.org/10.1103/PhysRevA.111.022207} {\bibfield  {journal} {\bibinfo  {journal} {Phys. Rev. A}\ }\textbf {\bibinfo {volume} {111}},\ \bibinfo {pages} {022207} (\bibinfo {year} {2025}{\natexlab{b}})}\BibitemShut {NoStop}%
\bibitem [{\citenamefont {Hsieh}(2020)}]{Hsieh2020}%
  \BibitemOpen
  \bibfield  {author} {\bibinfo {author} {\bibfnamefont {C.-Y.}\ \bibnamefont {Hsieh}},\ }\bibfield  {title} {\bibinfo {title} {Resource preservability},\ }\href {https://doi.org/10.22331/q-2020-03-19-244} {\bibfield  {journal} {\bibinfo  {journal} {{Quantum}}\ }\textbf {\bibinfo {volume} {4}},\ \bibinfo {pages} {244} (\bibinfo {year} {2020})}\BibitemShut {NoStop}%
\bibitem [{\citenamefont {Liu}\ and\ \citenamefont {Winter}(2019)}]{Liu2019DRT}%
  \BibitemOpen
  \bibfield  {author} {\bibinfo {author} {\bibfnamefont {Z.-W.}\ \bibnamefont {Liu}}\ and\ \bibinfo {author} {\bibfnamefont {A.}~\bibnamefont {Winter}},\ }\href@noop {} {\bibinfo {title} {Resource theories of quantum channels and the universal role of resource erasure}} (\bibinfo {year} {2019}),\ \Eprint {https://arxiv.org/abs/1904.04201} {arXiv:1904.04201 [quant-ph]} \BibitemShut {NoStop}%
\bibitem [{\citenamefont {Liu}\ and\ \citenamefont {Yuan}(2020)}]{Liu2020PRR}%
  \BibitemOpen
  \bibfield  {author} {\bibinfo {author} {\bibfnamefont {Y.}~\bibnamefont {Liu}}\ and\ \bibinfo {author} {\bibfnamefont {X.}~\bibnamefont {Yuan}},\ }\bibfield  {title} {\bibinfo {title} {Operational resource theory of quantum channels},\ }\href {https://doi.org/10.1103/PhysRevResearch.2.012035} {\bibfield  {journal} {\bibinfo  {journal} {Phys. Rev. Res.}\ }\textbf {\bibinfo {volume} {2}},\ \bibinfo {pages} {012035} (\bibinfo {year} {2020})}\BibitemShut {NoStop}%
\bibitem [{\citenamefont {Hsieh}\ \emph {et~al.}(2020)\citenamefont {Hsieh}, \citenamefont {Lostaglio},\ and\ \citenamefont {Ac\'{\i}n}}]{Hsieh2020PRR}%
  \BibitemOpen
  \bibfield  {author} {\bibinfo {author} {\bibfnamefont {C.-Y.}\ \bibnamefont {Hsieh}}, \bibinfo {author} {\bibfnamefont {M.}~\bibnamefont {Lostaglio}},\ and\ \bibinfo {author} {\bibfnamefont {A.}~\bibnamefont {Ac\'{\i}n}},\ }\bibfield  {title} {\bibinfo {title} {Entanglement preserving local thermalization},\ }\href {https://doi.org/10.1103/PhysRevResearch.2.013379} {\bibfield  {journal} {\bibinfo  {journal} {Phys. Rev. Res.}\ }\textbf {\bibinfo {volume} {2}},\ \bibinfo {pages} {013379} (\bibinfo {year} {2020})}\BibitemShut {NoStop}%
\bibitem [{\citenamefont {Streltsov}\ \emph {et~al.}(2015)\citenamefont {Streltsov}, \citenamefont {Singh}, \citenamefont {Dhar}, \citenamefont {Bera},\ and\ \citenamefont {Adesso}}]{Streltsov2015PRL}%
  \BibitemOpen
  \bibfield  {author} {\bibinfo {author} {\bibfnamefont {A.}~\bibnamefont {Streltsov}}, \bibinfo {author} {\bibfnamefont {U.}~\bibnamefont {Singh}}, \bibinfo {author} {\bibfnamefont {H.~S.}\ \bibnamefont {Dhar}}, \bibinfo {author} {\bibfnamefont {M.~N.}\ \bibnamefont {Bera}},\ and\ \bibinfo {author} {\bibfnamefont {G.}~\bibnamefont {Adesso}},\ }\bibfield  {title} {\bibinfo {title} {Measuring quantum coherence with entanglement},\ }\href {https://doi.org/10.1103/PhysRevLett.115.020403} {\bibfield  {journal} {\bibinfo  {journal} {Phys. Rev. Lett.}\ }\textbf {\bibinfo {volume} {115}},\ \bibinfo {pages} {020403} (\bibinfo {year} {2015})}\BibitemShut {NoStop}%
\bibitem [{\citenamefont {Stratton}\ \emph {et~al.}(2024)\citenamefont {Stratton}, \citenamefont {Hsieh},\ and\ \citenamefont {Skrzypczyk}}]{Stratton2023}%
  \BibitemOpen
  \bibfield  {author} {\bibinfo {author} {\bibfnamefont {B.}~\bibnamefont {Stratton}}, \bibinfo {author} {\bibfnamefont {C.-Y.}\ \bibnamefont {Hsieh}},\ and\ \bibinfo {author} {\bibfnamefont {P.}~\bibnamefont {Skrzypczyk}},\ }\bibfield  {title} {\bibinfo {title} {Dynamical resource theory of informational nonequilibrium preservability},\ }\href {https://doi.org/10.1103/PhysRevLett.132.110202} {\bibfield  {journal} {\bibinfo  {journal} {Phys. Rev. Lett.}\ }\textbf {\bibinfo {volume} {132}},\ \bibinfo {pages} {110202} (\bibinfo {year} {2024})}\BibitemShut {NoStop}%
\bibitem [{\citenamefont {Lostaglio}(2019)}]{Lostaglio2019}%
  \BibitemOpen
  \bibfield  {author} {\bibinfo {author} {\bibfnamefont {M.}~\bibnamefont {Lostaglio}},\ }\bibfield  {title} {\bibinfo {title} {An introductory review of the resource theory approach to thermodynamics},\ }\href {https://doi.org/10.1088/1361-6633/ab46e5} {\bibfield  {journal} {\bibinfo  {journal} {Rep. Prog. Phys.}\ }\textbf {\bibinfo {volume} {82}},\ \bibinfo {pages} {114001} (\bibinfo {year} {2019})}\BibitemShut {NoStop}%
\bibitem [{\citenamefont {Faist}\ \emph {et~al.}(2015)\citenamefont {Faist}, \citenamefont {Oppenheim},\ and\ \citenamefont {Renner}}]{Faist2015NJP}%
  \BibitemOpen
  \bibfield  {author} {\bibinfo {author} {\bibfnamefont {P.}~\bibnamefont {Faist}}, \bibinfo {author} {\bibfnamefont {J.}~\bibnamefont {Oppenheim}},\ and\ \bibinfo {author} {\bibfnamefont {R.}~\bibnamefont {Renner}},\ }\bibfield  {title} {\bibinfo {title} {Gibbs-preserving maps outperform thermal operations in the quantum regime},\ }\href {https://doi.org/10.1088/1367-2630/17/4/043003} {\bibfield  {journal} {\bibinfo  {journal} {New J. Phys.}\ }\textbf {\bibinfo {volume} {17}},\ \bibinfo {pages} {043003} (\bibinfo {year} {2015})}\BibitemShut {NoStop}%
\bibitem [{\citenamefont {Horodecki}\ \emph {et~al.}(2003{\natexlab{a}})\citenamefont {Horodecki}, \citenamefont {Horodecki}, \citenamefont {Horodecki}, \citenamefont {Horodecki}, \citenamefont {Oppenheim}, \citenamefont {Sen(De)},\ and\ \citenamefont {Sen}}]{HorodeckiPRL2003}%
  \BibitemOpen
  \bibfield  {author} {\bibinfo {author} {\bibfnamefont {M.}~\bibnamefont {Horodecki}}, \bibinfo {author} {\bibfnamefont {K.}~\bibnamefont {Horodecki}}, \bibinfo {author} {\bibfnamefont {P.}~\bibnamefont {Horodecki}}, \bibinfo {author} {\bibfnamefont {R.}~\bibnamefont {Horodecki}}, \bibinfo {author} {\bibfnamefont {J.}~\bibnamefont {Oppenheim}}, \bibinfo {author} {\bibfnamefont {A.}~\bibnamefont {Sen(De)}},\ and\ \bibinfo {author} {\bibfnamefont {U.}~\bibnamefont {Sen}},\ }\bibfield  {title} {\bibinfo {title} {Local information as a resource in distributed quantum systems},\ }\href {https://doi.org/10.1103/PhysRevLett.90.100402} {\bibfield  {journal} {\bibinfo  {journal} {Phys. Rev. Lett.}\ }\textbf {\bibinfo {volume} {90}},\ \bibinfo {pages} {100402} (\bibinfo {year} {2003}{\natexlab{a}})}\BibitemShut {NoStop}%
\bibitem [{\citenamefont {Horodecki}\ \emph {et~al.}(2003{\natexlab{b}})\citenamefont {Horodecki}, \citenamefont {Horodecki},\ and\ \citenamefont {Oppenheim}}]{HorodeckiPRA2003}%
  \BibitemOpen
  \bibfield  {author} {\bibinfo {author} {\bibfnamefont {M.}~\bibnamefont {Horodecki}}, \bibinfo {author} {\bibfnamefont {P.}~\bibnamefont {Horodecki}},\ and\ \bibinfo {author} {\bibfnamefont {J.}~\bibnamefont {Oppenheim}},\ }\bibfield  {title} {\bibinfo {title} {Reversible transformations from pure to mixed states and the unique measure of information},\ }\href {https://doi.org/10.1103/PhysRevA.67.062104} {\bibfield  {journal} {\bibinfo  {journal} {Phys. Rev. A}\ }\textbf {\bibinfo {volume} {67}},\ \bibinfo {pages} {062104} (\bibinfo {year} {2003}{\natexlab{b}})}\BibitemShut {NoStop}%
\bibitem [{\citenamefont {Chitambar}\ and\ \citenamefont {Gour}(2019)}]{ChitambarRMP2019}%
  \BibitemOpen
  \bibfield  {author} {\bibinfo {author} {\bibfnamefont {E.}~\bibnamefont {Chitambar}}\ and\ \bibinfo {author} {\bibfnamefont {G.}~\bibnamefont {Gour}},\ }\bibfield  {title} {\bibinfo {title} {Quantum resource theories},\ }\href {https://doi.org/10.1103/RevModPhys.91.025001} {\bibfield  {journal} {\bibinfo  {journal} {Rev. Mod. Phys.}\ }\textbf {\bibinfo {volume} {91}},\ \bibinfo {pages} {025001} (\bibinfo {year} {2019})}\BibitemShut {NoStop}%
\bibitem [{\citenamefont {Takagi}\ and\ \citenamefont {Regula}(2019)}]{Takagi2019}%
  \BibitemOpen
  \bibfield  {author} {\bibinfo {author} {\bibfnamefont {R.}~\bibnamefont {Takagi}}\ and\ \bibinfo {author} {\bibfnamefont {B.}~\bibnamefont {Regula}},\ }\bibfield  {title} {\bibinfo {title} {General resource theories in quantum mechanics and beyond: Operational characterization via discrimination tasks},\ }\href {https://doi.org/10.1103/PhysRevX.9.031053} {\bibfield  {journal} {\bibinfo  {journal} {Phys. Rev. X}\ }\textbf {\bibinfo {volume} {9}},\ \bibinfo {pages} {031053} (\bibinfo {year} {2019})}\BibitemShut {NoStop}%
\bibitem [{\citenamefont {Skrzypczyk}\ \emph {et~al.}(2019)\citenamefont {Skrzypczyk}, \citenamefont {\ifmmode \check{S}\else \v{S}\fi{}upi\ifmmode~\acute{c}\else \'{c}\fi{}},\ and\ \citenamefont {Cavalcanti}}]{Skrzypczyk2019}%
  \BibitemOpen
  \bibfield  {author} {\bibinfo {author} {\bibfnamefont {P.}~\bibnamefont {Skrzypczyk}}, \bibinfo {author} {\bibfnamefont {I.}~\bibnamefont {\ifmmode \check{S}\else \v{S}\fi{}upi\ifmmode~\acute{c}\else \'{c}\fi{}}},\ and\ \bibinfo {author} {\bibfnamefont {D.}~\bibnamefont {Cavalcanti}},\ }\bibfield  {title} {\bibinfo {title} {All sets of incompatible measurements give an advantage in quantum state discrimination},\ }\href {https://doi.org/10.1103/PhysRevLett.122.130403} {\bibfield  {journal} {\bibinfo  {journal} {Phys. Rev. Lett.}\ }\textbf {\bibinfo {volume} {122}},\ \bibinfo {pages} {130403} (\bibinfo {year} {2019})}\BibitemShut {NoStop}%
\bibitem [{\citenamefont {Hsieh}\ \emph {et~al.}(2024)\citenamefont {Hsieh}, \citenamefont {Tabia}, \citenamefont {Yin},\ and\ \citenamefont {Liang}}]{Hsieh2023-2}%
  \BibitemOpen
  \bibfield  {author} {\bibinfo {author} {\bibfnamefont {C.-Y.}\ \bibnamefont {Hsieh}}, \bibinfo {author} {\bibfnamefont {G.~N.~M.}\ \bibnamefont {Tabia}}, \bibinfo {author} {\bibfnamefont {Y.-C.}\ \bibnamefont {Yin}},\ and\ \bibinfo {author} {\bibfnamefont {Y.-C.}\ \bibnamefont {Liang}},\ }\bibfield  {title} {\bibinfo {title} {Resource marginal problems},\ }\href {https://doi.org/10.22331/q-2024-05-22-1353} {\bibfield  {journal} {\bibinfo  {journal} {{Quantum}}\ }\textbf {\bibinfo {volume} {8}},\ \bibinfo {pages} {1353} (\bibinfo {year} {2024})}\BibitemShut {NoStop}%
\bibitem [{\citenamefont {Hsieh}\ \emph {et~al.}(2022)\citenamefont {Hsieh}, \citenamefont {Lostaglio},\ and\ \citenamefont {Ac\'{\i}n}}]{Hsieh2022PRR}%
  \BibitemOpen
  \bibfield  {author} {\bibinfo {author} {\bibfnamefont {C.-Y.}\ \bibnamefont {Hsieh}}, \bibinfo {author} {\bibfnamefont {M.}~\bibnamefont {Lostaglio}},\ and\ \bibinfo {author} {\bibfnamefont {A.}~\bibnamefont {Ac\'{\i}n}},\ }\bibfield  {title} {\bibinfo {title} {Quantum channel marginal problem},\ }\href {https://doi.org/10.1103/PhysRevResearch.4.013249} {\bibfield  {journal} {\bibinfo  {journal} {Phys. Rev. Res.}\ }\textbf {\bibinfo {volume} {4}},\ \bibinfo {pages} {013249} (\bibinfo {year} {2022})}\BibitemShut {NoStop}%
\bibitem [{\citenamefont {Ducuara}\ and\ \citenamefont {Skrzypczyk}(2020)}]{Ducuara2020PRL}%
  \BibitemOpen
  \bibfield  {author} {\bibinfo {author} {\bibfnamefont {A.~F.}\ \bibnamefont {Ducuara}}\ and\ \bibinfo {author} {\bibfnamefont {P.}~\bibnamefont {Skrzypczyk}},\ }\bibfield  {title} {\bibinfo {title} {Operational interpretation of weight-based resource quantifiers in convex quantum resource theories},\ }\href {https://doi.org/10.1103/PhysRevLett.125.110401} {\bibfield  {journal} {\bibinfo  {journal} {Phys. Rev. Lett.}\ }\textbf {\bibinfo {volume} {125}},\ \bibinfo {pages} {110401} (\bibinfo {year} {2020})}\BibitemShut {NoStop}%
\bibitem [{\citenamefont {Uola}\ \emph {et~al.}(2020{\natexlab{b}})\citenamefont {Uola}, \citenamefont {Bullock}, \citenamefont {Kraft}, \citenamefont {Pellonp\"a\"a},\ and\ \citenamefont {Brunner}}]{Uola2020PRL}%
  \BibitemOpen
  \bibfield  {author} {\bibinfo {author} {\bibfnamefont {R.}~\bibnamefont {Uola}}, \bibinfo {author} {\bibfnamefont {T.}~\bibnamefont {Bullock}}, \bibinfo {author} {\bibfnamefont {T.}~\bibnamefont {Kraft}}, \bibinfo {author} {\bibfnamefont {J.-P.}\ \bibnamefont {Pellonp\"a\"a}},\ and\ \bibinfo {author} {\bibfnamefont {N.}~\bibnamefont {Brunner}},\ }\bibfield  {title} {\bibinfo {title} {All quantum resources provide an advantage in exclusion tasks},\ }\href {https://doi.org/10.1103/PhysRevLett.125.110402} {\bibfield  {journal} {\bibinfo  {journal} {Phys. Rev. Lett.}\ }\textbf {\bibinfo {volume} {125}},\ \bibinfo {pages} {110402} (\bibinfo {year} {2020}{\natexlab{b}})}\BibitemShut {NoStop}%
\bibitem [{\citenamefont {Tan}\ \emph {et~al.}(2021)\citenamefont {Tan}, \citenamefont {Narasimhachar},\ and\ \citenamefont {Regula}}]{Tan2021PRL}%
  \BibitemOpen
  \bibfield  {author} {\bibinfo {author} {\bibfnamefont {K.~C.}\ \bibnamefont {Tan}}, \bibinfo {author} {\bibfnamefont {V.}~\bibnamefont {Narasimhachar}},\ and\ \bibinfo {author} {\bibfnamefont {B.}~\bibnamefont {Regula}},\ }\bibfield  {title} {\bibinfo {title} {Fisher information universally identifies quantum resources},\ }\href {https://doi.org/10.1103/PhysRevLett.127.200402} {\bibfield  {journal} {\bibinfo  {journal} {Phys. Rev. Lett.}\ }\textbf {\bibinfo {volume} {127}},\ \bibinfo {pages} {200402} (\bibinfo {year} {2021})}\BibitemShut {NoStop}%
\bibitem [{\citenamefont {Uola}\ \emph {et~al.}(2020{\natexlab{c}})\citenamefont {Uola}, \citenamefont {Kraft},\ and\ \citenamefont {Abbott}}]{Uola2020PRA}%
  \BibitemOpen
  \bibfield  {author} {\bibinfo {author} {\bibfnamefont {R.}~\bibnamefont {Uola}}, \bibinfo {author} {\bibfnamefont {T.}~\bibnamefont {Kraft}},\ and\ \bibinfo {author} {\bibfnamefont {A.~A.}\ \bibnamefont {Abbott}},\ }\bibfield  {title} {\bibinfo {title} {Quantification of quantum dynamics with input-output games},\ }\href {https://doi.org/10.1103/PhysRevA.101.052306} {\bibfield  {journal} {\bibinfo  {journal} {Phys. Rev. A}\ }\textbf {\bibinfo {volume} {101}},\ \bibinfo {pages} {052306} (\bibinfo {year} {2020}{\natexlab{c}})}\BibitemShut {NoStop}%
\bibitem [{\citenamefont {Kosloff}\ and\ \citenamefont {Feldmann}(2002)}]{KosloffPRE2002}%
  \BibitemOpen
  \bibfield  {author} {\bibinfo {author} {\bibfnamefont {R.}~\bibnamefont {Kosloff}}\ and\ \bibinfo {author} {\bibfnamefont {T.}~\bibnamefont {Feldmann}},\ }\bibfield  {title} {\bibinfo {title} {Discrete four-stroke quantum heat engine exploring the origin of friction},\ }\href {https://doi.org/10.1103/PhysRevE.65.055102} {\bibfield  {journal} {\bibinfo  {journal} {Phys. Rev. E}\ }\textbf {\bibinfo {volume} {65}},\ \bibinfo {pages} {055102} (\bibinfo {year} {2002})}\BibitemShut {NoStop}%
\bibitem [{\citenamefont {Feldmann}\ and\ \citenamefont {Kosloff}(2006)}]{FeldmannPRE2006}%
  \BibitemOpen
  \bibfield  {author} {\bibinfo {author} {\bibfnamefont {T.}~\bibnamefont {Feldmann}}\ and\ \bibinfo {author} {\bibfnamefont {R.}~\bibnamefont {Kosloff}},\ }\bibfield  {title} {\bibinfo {title} {Quantum lubrication: Suppression of friction in a first-principles four-stroke heat engine},\ }\href {https://doi.org/10.1103/PhysRevE.73.025107} {\bibfield  {journal} {\bibinfo  {journal} {Phys. Rev. E}\ }\textbf {\bibinfo {volume} {73}},\ \bibinfo {pages} {025107} (\bibinfo {year} {2006})}\BibitemShut {NoStop}%
\bibitem [{\citenamefont {Ji}\ \emph {et~al.}(2022)\citenamefont {Ji}, \citenamefont {Chai}, \citenamefont {Wang}, \citenamefont {Guo}, \citenamefont {Rong}, \citenamefont {Shi}, \citenamefont {Ren}, \citenamefont {Wang},\ and\ \citenamefont {Du}}]{JiPRL2022}%
  \BibitemOpen
  \bibfield  {author} {\bibinfo {author} {\bibfnamefont {W.}~\bibnamefont {Ji}}, \bibinfo {author} {\bibfnamefont {Z.}~\bibnamefont {Chai}}, \bibinfo {author} {\bibfnamefont {M.}~\bibnamefont {Wang}}, \bibinfo {author} {\bibfnamefont {Y.}~\bibnamefont {Guo}}, \bibinfo {author} {\bibfnamefont {X.}~\bibnamefont {Rong}}, \bibinfo {author} {\bibfnamefont {F.}~\bibnamefont {Shi}}, \bibinfo {author} {\bibfnamefont {C.}~\bibnamefont {Ren}}, \bibinfo {author} {\bibfnamefont {Y.}~\bibnamefont {Wang}},\ and\ \bibinfo {author} {\bibfnamefont {J.}~\bibnamefont {Du}},\ }\bibfield  {title} {\bibinfo {title} {Spin quantum heat engine quantified by quantum steering},\ }\href {https://doi.org/10.1103/PhysRevLett.128.090602} {\bibfield  {journal} {\bibinfo  {journal} {Phys. Rev. Lett.}\ }\textbf {\bibinfo {volume} {128}},\ \bibinfo {pages} {090602} (\bibinfo {year} {2022})}\BibitemShut {NoStop}%
\bibitem [{\citenamefont {Beyer}\ \emph {et~al.}(2019)\citenamefont {Beyer}, \citenamefont {Luoma},\ and\ \citenamefont {Strunz}}]{BeyerPRL2019}%
  \BibitemOpen
  \bibfield  {author} {\bibinfo {author} {\bibfnamefont {K.}~\bibnamefont {Beyer}}, \bibinfo {author} {\bibfnamefont {K.}~\bibnamefont {Luoma}},\ and\ \bibinfo {author} {\bibfnamefont {W.~T.}\ \bibnamefont {Strunz}},\ }\bibfield  {title} {\bibinfo {title} {Steering heat engines: A truly quantum maxwell demon},\ }\href {https://doi.org/10.1103/PhysRevLett.123.250606} {\bibfield  {journal} {\bibinfo  {journal} {Phys. Rev. Lett.}\ }\textbf {\bibinfo {volume} {123}},\ \bibinfo {pages} {250606} (\bibinfo {year} {2019})}\BibitemShut {NoStop}%
\bibitem [{\citenamefont {Chan}\ \emph {et~al.}(2022)\citenamefont {Chan}, \citenamefont {Huang}, \citenamefont {Lin}, \citenamefont {Ku}, \citenamefont {Chen}, \citenamefont {Chen},\ and\ \citenamefont {Chen}}]{ChanPRA2022}%
  \BibitemOpen
  \bibfield  {author} {\bibinfo {author} {\bibfnamefont {F.-J.}\ \bibnamefont {Chan}}, \bibinfo {author} {\bibfnamefont {Y.-T.}\ \bibnamefont {Huang}}, \bibinfo {author} {\bibfnamefont {J.-D.}\ \bibnamefont {Lin}}, \bibinfo {author} {\bibfnamefont {H.-Y.}\ \bibnamefont {Ku}}, \bibinfo {author} {\bibfnamefont {J.-S.}\ \bibnamefont {Chen}}, \bibinfo {author} {\bibfnamefont {H.-B.}\ \bibnamefont {Chen}},\ and\ \bibinfo {author} {\bibfnamefont {Y.-N.}\ \bibnamefont {Chen}},\ }\bibfield  {title} {\bibinfo {title} {Maxwell's two-demon engine under pure dephasing noise},\ }\href {https://doi.org/10.1103/PhysRevA.106.052201} {\bibfield  {journal} {\bibinfo  {journal} {Phys. Rev. A}\ }\textbf {\bibinfo {volume} {106}},\ \bibinfo {pages} {052201} (\bibinfo {year} {2022})}\BibitemShut {NoStop}%
\bibitem [{\citenamefont {Biswas}\ \emph {et~al.}(2025)\citenamefont {Biswas}, \citenamefont {Datta},\ and\ \citenamefont {Garc\'{\i}a-Pintos}}]{Biswas2025PRL}%
  \BibitemOpen
  \bibfield  {author} {\bibinfo {author} {\bibfnamefont {T.}~\bibnamefont {Biswas}}, \bibinfo {author} {\bibfnamefont {C.}~\bibnamefont {Datta}},\ and\ \bibinfo {author} {\bibfnamefont {L.~P.}\ \bibnamefont {Garc\'{\i}a-Pintos}},\ }\bibfield  {title} {\bibinfo {title} {Quantum thermodynamic advantage in work extraction from steerable quantum correlations},\ }\href {https://doi.org/10.1103/9qcc-7lq5} {\bibfield  {journal} {\bibinfo  {journal} {Phys. Rev. Lett.}\ }\textbf {\bibinfo {volume} {135}},\ \bibinfo {pages} {110402} (\bibinfo {year} {2025})}\BibitemShut {NoStop}%
\bibitem [{\citenamefont {Jennings}\ and\ \citenamefont {Rudolph}(2010{\natexlab{a}})}]{Jennings2010PRE}%
  \BibitemOpen
  \bibfield  {author} {\bibinfo {author} {\bibfnamefont {D.}~\bibnamefont {Jennings}}\ and\ \bibinfo {author} {\bibfnamefont {T.}~\bibnamefont {Rudolph}},\ }\bibfield  {title} {\bibinfo {title} {Entanglement and the thermodynamic arrow of time},\ }\href {https://doi.org/10.1103/PhysRevE.81.061130} {\bibfield  {journal} {\bibinfo  {journal} {Phys. Rev. E}\ }\textbf {\bibinfo {volume} {81}},\ \bibinfo {pages} {061130} (\bibinfo {year} {2010}{\natexlab{a}})}\BibitemShut {NoStop}%
\bibitem [{\citenamefont {Lostaglio}\ \emph {et~al.}(2017)\citenamefont {Lostaglio}, \citenamefont {Jennings},\ and\ \citenamefont {Rudolph}}]{LostaglioNJP2017}%
  \BibitemOpen
  \bibfield  {author} {\bibinfo {author} {\bibfnamefont {M.}~\bibnamefont {Lostaglio}}, \bibinfo {author} {\bibfnamefont {D.}~\bibnamefont {Jennings}},\ and\ \bibinfo {author} {\bibfnamefont {T.}~\bibnamefont {Rudolph}},\ }\bibfield  {title} {\bibinfo {title} {Thermodynamic resource theories, non-commutativity and maximum entropy principles},\ }\href {https://doi.org/10.1088/1367-2630/aa617f} {\bibfield  {journal} {\bibinfo  {journal} {New J. Phys.}\ }\textbf {\bibinfo {volume} {19}},\ \bibinfo {pages} {043008} (\bibinfo {year} {2017})}\BibitemShut {NoStop}%
\bibitem [{\citenamefont {Majidy}\ \emph {et~al.}(2023)\citenamefont {Majidy}, \citenamefont {Braasch}, \citenamefont {Lasek}, \citenamefont {Upadhyaya}, \citenamefont {Kalev},\ and\ \citenamefont {Yunger~Halpern}}]{Majidy2023}%
  \BibitemOpen
  \bibfield  {author} {\bibinfo {author} {\bibfnamefont {S.}~\bibnamefont {Majidy}}, \bibinfo {author} {\bibfnamefont {W.~F.}\ \bibnamefont {Braasch}}, \bibinfo {author} {\bibfnamefont {A.}~\bibnamefont {Lasek}}, \bibinfo {author} {\bibfnamefont {T.}~\bibnamefont {Upadhyaya}}, \bibinfo {author} {\bibfnamefont {A.}~\bibnamefont {Kalev}},\ and\ \bibinfo {author} {\bibfnamefont {N.}~\bibnamefont {Yunger~Halpern}},\ }\bibfield  {title} {\bibinfo {title} {Noncommuting conserved charges in quantum thermodynamics and beyond},\ }\href {https://doi.org/10.1038/s42254-023-00641-9} {\bibfield  {journal} {\bibinfo  {journal} {Nat. Rev. Phys.}\ }\textbf {\bibinfo {volume} {5}},\ \bibinfo {pages} {689} (\bibinfo {year} {2023})}\BibitemShut {NoStop}%
\bibitem [{\citenamefont {Yunger~Halpern}\ \emph {et~al.}(2016)\citenamefont {Yunger~Halpern}, \citenamefont {Faist}, \citenamefont {Oppenheim},\ and\ \citenamefont {Winter}}]{YungerHalpern2016NC}%
  \BibitemOpen
  \bibfield  {author} {\bibinfo {author} {\bibfnamefont {N.}~\bibnamefont {Yunger~Halpern}}, \bibinfo {author} {\bibfnamefont {P.}~\bibnamefont {Faist}}, \bibinfo {author} {\bibfnamefont {J.}~\bibnamefont {Oppenheim}},\ and\ \bibinfo {author} {\bibfnamefont {A.}~\bibnamefont {Winter}},\ }\bibfield  {title} {\bibinfo {title} {Microcanonical and resource-theoretic derivations of the thermal state of a quantum system with noncommuting charges},\ }\href {https://doi.org/10.1038/ncomms12051} {\bibfield  {journal} {\bibinfo  {journal} {Nat. Commun.}\ }\textbf {\bibinfo {volume} {7}},\ \bibinfo {pages} {12051} (\bibinfo {year} {2016})}\BibitemShut {NoStop}%
\bibitem [{\citenamefont {Guryanova}\ \emph {et~al.}(2016)\citenamefont {Guryanova}, \citenamefont {Popescu}, \citenamefont {Short}, \citenamefont {Silva},\ and\ \citenamefont {Skrzypczyk}}]{Guryanova2016NC}%
  \BibitemOpen
  \bibfield  {author} {\bibinfo {author} {\bibfnamefont {Y.}~\bibnamefont {Guryanova}}, \bibinfo {author} {\bibfnamefont {S.}~\bibnamefont {Popescu}}, \bibinfo {author} {\bibfnamefont {A.~J.}\ \bibnamefont {Short}}, \bibinfo {author} {\bibfnamefont {R.}~\bibnamefont {Silva}},\ and\ \bibinfo {author} {\bibfnamefont {P.}~\bibnamefont {Skrzypczyk}},\ }\bibfield  {title} {\bibinfo {title} {Thermodynamics of quantum systems with multiple conserved quantities},\ }\href {https://doi.org/10.1038/ncomms12049} {\bibfield  {journal} {\bibinfo  {journal} {Nat. Commun.}\ }\textbf {\bibinfo {volume} {7}},\ \bibinfo {pages} {12049} (\bibinfo {year} {2016})}\BibitemShut {NoStop}%
\bibitem [{\citenamefont {Lostaglio}(2020)}]{Lostaglio2020PRL}%
  \BibitemOpen
  \bibfield  {author} {\bibinfo {author} {\bibfnamefont {M.}~\bibnamefont {Lostaglio}},\ }\bibfield  {title} {\bibinfo {title} {Certifying quantum signatures in thermodynamics and metrology via contextuality of quantum linear response},\ }\href {https://doi.org/10.1103/PhysRevLett.125.230603} {\bibfield  {journal} {\bibinfo  {journal} {Phys. Rev. Lett.}\ }\textbf {\bibinfo {volume} {125}},\ \bibinfo {pages} {230603} (\bibinfo {year} {2020})}\BibitemShut {NoStop}%
\bibitem [{\citenamefont {Levy}\ and\ \citenamefont {Lostaglio}(2020)}]{Levy2020PRXQ}%
  \BibitemOpen
  \bibfield  {author} {\bibinfo {author} {\bibfnamefont {A.}~\bibnamefont {Levy}}\ and\ \bibinfo {author} {\bibfnamefont {M.}~\bibnamefont {Lostaglio}},\ }\bibfield  {title} {\bibinfo {title} {Quasiprobability distribution for heat fluctuations in the quantum regime},\ }\href {https://doi.org/10.1103/PRXQuantum.1.010309} {\bibfield  {journal} {\bibinfo  {journal} {PRX Quantum}\ }\textbf {\bibinfo {volume} {1}},\ \bibinfo {pages} {010309} (\bibinfo {year} {2020})}\BibitemShut {NoStop}%
\bibitem [{\citenamefont {Puliyil}\ \emph {et~al.}(2022)\citenamefont {Puliyil}, \citenamefont {Banik},\ and\ \citenamefont {Alimuddin}}]{Puliyil2022PRL}%
  \BibitemOpen
  \bibfield  {author} {\bibinfo {author} {\bibfnamefont {S.}~\bibnamefont {Puliyil}}, \bibinfo {author} {\bibfnamefont {M.}~\bibnamefont {Banik}},\ and\ \bibinfo {author} {\bibfnamefont {M.}~\bibnamefont {Alimuddin}},\ }\bibfield  {title} {\bibinfo {title} {Thermodynamic signatures of genuinely multipartite entanglement},\ }\href {https://doi.org/10.1103/PhysRevLett.129.070601} {\bibfield  {journal} {\bibinfo  {journal} {Phys. Rev. Lett.}\ }\textbf {\bibinfo {volume} {129}},\ \bibinfo {pages} {070601} (\bibinfo {year} {2022})}\BibitemShut {NoStop}%
\bibitem [{\citenamefont {Upadhyaya}\ \emph {et~al.}(2024)\citenamefont {Upadhyaya}, \citenamefont {Braasch}, \citenamefont {Landi},\ and\ \citenamefont {Halpern}}]{Upadhyaya2023}%
  \BibitemOpen
  \bibfield  {author} {\bibinfo {author} {\bibfnamefont {T.}~\bibnamefont {Upadhyaya}}, \bibinfo {author} {\bibfnamefont {W.~F.}\ \bibnamefont {Braasch}}, \bibinfo {author} {\bibfnamefont {G.~T.}\ \bibnamefont {Landi}},\ and\ \bibinfo {author} {\bibfnamefont {N.~Y.}\ \bibnamefont {Halpern}},\ }\bibfield  {title} {\bibinfo {title} {Non-abelian transport distinguishes three usually equivalent notions of entropy production},\ }\href {https://doi.org/10.1103/PRXQuantum.5.030355} {\bibfield  {journal} {\bibinfo  {journal} {PRX Quantum}\ }\textbf {\bibinfo {volume} {5}},\ \bibinfo {pages} {030355} (\bibinfo {year} {2024})}\BibitemShut {NoStop}%
\bibitem [{\citenamefont {Centrone}\ and\ \citenamefont {Gessner}(2024)}]{Centrone2024}%
  \BibitemOpen
  \bibfield  {author} {\bibinfo {author} {\bibfnamefont {F.}~\bibnamefont {Centrone}}\ and\ \bibinfo {author} {\bibfnamefont {M.}~\bibnamefont {Gessner}},\ }\bibfield  {title} {\bibinfo {title} {Breaking local quantum speed limits with steering},\ }\href {https://doi.org/10.1103/PhysRevResearch.6.L042067} {\bibfield  {journal} {\bibinfo  {journal} {Phys. Rev. Res.}\ }\textbf {\bibinfo {volume} {6}},\ \bibinfo {pages} {L042067} (\bibinfo {year} {2024})}\BibitemShut {NoStop}%
\bibitem [{\citenamefont {Lipka-Bartosik}\ \emph {et~al.}(2024)\citenamefont {Lipka-Bartosik}, \citenamefont {Diotallevi},\ and\ \citenamefont {Bakhshinezhad}}]{Lipkabartosik2023}%
  \BibitemOpen
  \bibfield  {author} {\bibinfo {author} {\bibfnamefont {P.}~\bibnamefont {Lipka-Bartosik}}, \bibinfo {author} {\bibfnamefont {G.~F.}\ \bibnamefont {Diotallevi}},\ and\ \bibinfo {author} {\bibfnamefont {P.}~\bibnamefont {Bakhshinezhad}},\ }\bibfield  {title} {\bibinfo {title} {Fundamental limits on anomalous energy flows in correlated quantum systems},\ }\href {https://doi.org/10.1103/PhysRevLett.132.140402} {\bibfield  {journal} {\bibinfo  {journal} {Phys. Rev. Lett.}\ }\textbf {\bibinfo {volume} {132}},\ \bibinfo {pages} {140402} (\bibinfo {year} {2024})}\BibitemShut {NoStop}%
\bibitem [{\citenamefont {Perarnau-Llobet}\ \emph {et~al.}(2015)\citenamefont {Perarnau-Llobet}, \citenamefont {Hovhannisyan}, \citenamefont {Huber}, \citenamefont {Skrzypczyk}, \citenamefont {Brunner},\ and\ \citenamefont {Ac\'{\i}n}}]{Perarnau-LlobetPRX2015}%
  \BibitemOpen
  \bibfield  {author} {\bibinfo {author} {\bibfnamefont {M.}~\bibnamefont {Perarnau-Llobet}}, \bibinfo {author} {\bibfnamefont {K.~V.}\ \bibnamefont {Hovhannisyan}}, \bibinfo {author} {\bibfnamefont {M.}~\bibnamefont {Huber}}, \bibinfo {author} {\bibfnamefont {P.}~\bibnamefont {Skrzypczyk}}, \bibinfo {author} {\bibfnamefont {N.}~\bibnamefont {Brunner}},\ and\ \bibinfo {author} {\bibfnamefont {A.}~\bibnamefont {Ac\'{\i}n}},\ }\bibfield  {title} {\bibinfo {title} {Extractable work from correlations},\ }\href {https://doi.org/10.1103/PhysRevX.5.041011} {\bibfield  {journal} {\bibinfo  {journal} {Phys. Rev. X}\ }\textbf {\bibinfo {volume} {5}},\ \bibinfo {pages} {041011} (\bibinfo {year} {2015})}\BibitemShut {NoStop}%
\bibitem [{\citenamefont {Jennings}\ and\ \citenamefont {Rudolph}(2010{\natexlab{b}})}]{JenningsPRE2010}%
  \BibitemOpen
  \bibfield  {author} {\bibinfo {author} {\bibfnamefont {D.}~\bibnamefont {Jennings}}\ and\ \bibinfo {author} {\bibfnamefont {T.}~\bibnamefont {Rudolph}},\ }\bibfield  {title} {\bibinfo {title} {Entanglement and the thermodynamic arrow of time},\ }\href {https://doi.org/10.1103/PhysRevE.81.061130} {\bibfield  {journal} {\bibinfo  {journal} {Phys. Rev. E}\ }\textbf {\bibinfo {volume} {81}},\ \bibinfo {pages} {061130} (\bibinfo {year} {2010}{\natexlab{b}})}\BibitemShut {NoStop}%
\bibitem [{\citenamefont {Rio}\ \emph {et~al.}(2011)\citenamefont {Rio}, \citenamefont {{\AA}berg}, \citenamefont {Renner}, \citenamefont {Dahlsten},\ and\ \citenamefont {Vedral}}]{Rio2011}%
  \BibitemOpen
  \bibfield  {author} {\bibinfo {author} {\bibfnamefont {L.~d.}\ \bibnamefont {Rio}}, \bibinfo {author} {\bibfnamefont {J.}~\bibnamefont {{\AA}berg}}, \bibinfo {author} {\bibfnamefont {R.}~\bibnamefont {Renner}}, \bibinfo {author} {\bibfnamefont {O.}~\bibnamefont {Dahlsten}},\ and\ \bibinfo {author} {\bibfnamefont {V.}~\bibnamefont {Vedral}},\ }\bibfield  {title} {\bibinfo {title} {The thermodynamic meaning of negative entropy},\ }\href {https://doi.org/10.1038/nature10123} {\bibfield  {journal} {\bibinfo  {journal} {Nature}\ }\textbf {\bibinfo {volume} {474}},\ \bibinfo {pages} {61} (\bibinfo {year} {2011})}\BibitemShut {NoStop}%
\bibitem [{\citenamefont {Skrzypczyk}\ \emph {et~al.}(2014)\citenamefont {Skrzypczyk}, \citenamefont {Short},\ and\ \citenamefont {Popescu}}]{Skrzypczyk2014NC}%
  \BibitemOpen
  \bibfield  {author} {\bibinfo {author} {\bibfnamefont {P.}~\bibnamefont {Skrzypczyk}}, \bibinfo {author} {\bibfnamefont {A.~J.}\ \bibnamefont {Short}},\ and\ \bibinfo {author} {\bibfnamefont {S.}~\bibnamefont {Popescu}},\ }\bibfield  {title} {\bibinfo {title} {Work extraction and thermodynamics for individual quantum systems},\ }\href {https://doi.org/10.1038/ncomms5185} {\bibfield  {journal} {\bibinfo  {journal} {Nat. Commun.}\ }\textbf {\bibinfo {volume} {5}},\ \bibinfo {pages} {4185} (\bibinfo {year} {2014})}\BibitemShut {NoStop}%
\bibitem [{\citenamefont {Shiraishi}\ and\ \citenamefont {Takagi}(2024)}]{Shiraishi2023}%
  \BibitemOpen
  \bibfield  {author} {\bibinfo {author} {\bibfnamefont {N.}~\bibnamefont {Shiraishi}}\ and\ \bibinfo {author} {\bibfnamefont {R.}~\bibnamefont {Takagi}},\ }\bibfield  {title} {\bibinfo {title} {Arbitrary amplification of quantum coherence in asymptotic and catalytic transformation},\ }\href {https://doi.org/10.1103/PhysRevLett.132.180202} {\bibfield  {journal} {\bibinfo  {journal} {Phys. Rev. Lett.}\ }\textbf {\bibinfo {volume} {132}},\ \bibinfo {pages} {180202} (\bibinfo {year} {2024})}\BibitemShut {NoStop}%
\bibitem [{\citenamefont {Korzekwa}\ \emph {et~al.}(2016)\citenamefont {Korzekwa}, \citenamefont {Lostaglio}, \citenamefont {Oppenheim},\ and\ \citenamefont {Jennings}}]{Korzekwa2016NJP}%
  \BibitemOpen
  \bibfield  {author} {\bibinfo {author} {\bibfnamefont {K.}~\bibnamefont {Korzekwa}}, \bibinfo {author} {\bibfnamefont {M.}~\bibnamefont {Lostaglio}}, \bibinfo {author} {\bibfnamefont {J.}~\bibnamefont {Oppenheim}},\ and\ \bibinfo {author} {\bibfnamefont {D.}~\bibnamefont {Jennings}},\ }\bibfield  {title} {\bibinfo {title} {The extraction of work from quantum coherence},\ }\href {https://doi.org/10.1088/1367-2630/18/2/023045} {\bibfield  {journal} {\bibinfo  {journal} {New J. Phys.}\ }\textbf {\bibinfo {volume} {18}},\ \bibinfo {pages} {023045} (\bibinfo {year} {2016})}\BibitemShut {NoStop}%
\bibitem [{\citenamefont {Hsieh}\ and\ \citenamefont {Chen}(2024)}]{Hsieh2024}%
  \BibitemOpen
  \bibfield  {author} {\bibinfo {author} {\bibfnamefont {C.-Y.}\ \bibnamefont {Hsieh}}\ and\ \bibinfo {author} {\bibfnamefont {S.-L.}\ \bibnamefont {Chen}},\ }\bibfield  {title} {\bibinfo {title} {Thermodynamic approach to quantifying incompatible instruments},\ }\href {https://doi.org/10.1103/PhysRevLett.133.170401} {\bibfield  {journal} {\bibinfo  {journal} {Phys. Rev. Lett.}\ }\textbf {\bibinfo {volume} {133}},\ \bibinfo {pages} {170401} (\bibinfo {year} {2024})}\BibitemShut {NoStop}%
\bibitem [{\citenamefont {de~Oliveira~Junior}\ \emph {et~al.}(2025)\citenamefont {de~Oliveira~Junior}, \citenamefont {Brask},\ and\ \citenamefont {Lipka-Bartosik}}]{deOliveiraJunior2025PRL}%
  \BibitemOpen
  \bibfield  {author} {\bibinfo {author} {\bibfnamefont {A.}~\bibnamefont {de~Oliveira~Junior}}, \bibinfo {author} {\bibfnamefont {J.~B.}\ \bibnamefont {Brask}},\ and\ \bibinfo {author} {\bibfnamefont {P.}~\bibnamefont {Lipka-Bartosik}},\ }\bibfield  {title} {\bibinfo {title} {Heat as a witness of quantum properties},\ }\href {https://doi.org/10.1103/PhysRevLett.134.050401} {\bibfield  {journal} {\bibinfo  {journal} {Phys. Rev. Lett.}\ }\textbf {\bibinfo {volume} {134}},\ \bibinfo {pages} {050401} (\bibinfo {year} {2025})}\BibitemShut {NoStop}%
\bibitem [{\citenamefont {Buscemi}(2012)}]{Buscemi2012PRL}%
  \BibitemOpen
  \bibfield  {author} {\bibinfo {author} {\bibfnamefont {F.}~\bibnamefont {Buscemi}},\ }\bibfield  {title} {\bibinfo {title} {All entangled quantum states are nonlocal},\ }\href {https://doi.org/10.1103/PhysRevLett.108.200401} {\bibfield  {journal} {\bibinfo  {journal} {Phys. Rev. Lett.}\ }\textbf {\bibinfo {volume} {108}},\ \bibinfo {pages} {200401} (\bibinfo {year} {2012})}\BibitemShut {NoStop}%
\bibitem [{\citenamefont {Gour}\ \emph {et~al.}(2015)\citenamefont {Gour}, \citenamefont {Müller}, \citenamefont {Narasimhachar}, \citenamefont {Spekkens},\ and\ \citenamefont {{Yunger Halpern}}}]{Purity-review}%
  \BibitemOpen
  \bibfield  {author} {\bibinfo {author} {\bibfnamefont {G.}~\bibnamefont {Gour}}, \bibinfo {author} {\bibfnamefont {M.~P.}\ \bibnamefont {Müller}}, \bibinfo {author} {\bibfnamefont {V.}~\bibnamefont {Narasimhachar}}, \bibinfo {author} {\bibfnamefont {R.~W.}\ \bibnamefont {Spekkens}},\ and\ \bibinfo {author} {\bibfnamefont {N.}~\bibnamefont {{Yunger Halpern}}},\ }\bibfield  {title} {\bibinfo {title} {The resource theory of informational nonequilibrium in thermodynamics},\ }\href {https://doi.org/https://doi.org/10.1016/j.physrep.2015.04.003} {\bibfield  {journal} {\bibinfo  {journal} {Phys. Rep.}\ }\textbf {\bibinfo {volume} {583}},\ \bibinfo {pages} {1} (\bibinfo {year} {2015})}\BibitemShut {NoStop}%
\bibitem [{\citenamefont {Hsieh}\ and\ \citenamefont {Gessner}(2024)}]{Companion-arXiv}%
  \BibitemOpen
  \bibfield  {author} {\bibinfo {author} {\bibfnamefont {C.-Y.}\ \bibnamefont {Hsieh}}\ and\ \bibinfo {author} {\bibfnamefont {M.}~\bibnamefont {Gessner}},\ }\href@noop {} {\bibinfo {title} {General quantum resources provide advantages in work extraction tasks}} (\bibinfo {year} {2024}),\ \Eprint {https://arxiv.org/abs/2403.18753} {arXiv:2403.18753 [quant-ph]} \BibitemShut {NoStop}%
\bibitem [{\citenamefont {Nielsen}\ and\ \citenamefont {Chuang}(2010)}]{QIC-book}%
  \BibitemOpen
  \bibfield  {author} {\bibinfo {author} {\bibfnamefont {M.~A.}\ \bibnamefont {Nielsen}}\ and\ \bibinfo {author} {\bibfnamefont {I.~L.}\ \bibnamefont {Chuang}},\ }\href@noop {} {\emph {\bibinfo {title} {Quantum Computation and Quantum Information: 10th Anniversary Edition}}}\ (\bibinfo  {publisher} {Cambridge University Press},\ \bibinfo {year} {2010})\BibitemShut {NoStop}%
\bibitem [{Note1()}]{Note1}%
  \BibitemOpen
  \bibinfo {note} {{\protect \color {black}We always assume $\protect \mathcal {O}_R$ is compact in the topology induced by the diamond distance, i.e., the distance measure induced by diamond norm $\left \|\cdot \right \|_\diamond $~\cite {Watrous-book} (see also, e.g., Ref.~\cite {Regula2021Quantum}).}}\BibitemShut {Stop}%
\bibitem [{\citenamefont {Brandão}\ \emph {et~al.}(2015)\citenamefont {Brandão}, \citenamefont {Horodecki}, \citenamefont {Ng}, \citenamefont {Oppenheim},\ and\ \citenamefont {Wehner}}]{Brandão2015}%
  \BibitemOpen
  \bibfield  {author} {\bibinfo {author} {\bibfnamefont {F.}~\bibnamefont {Brandão}}, \bibinfo {author} {\bibfnamefont {M.}~\bibnamefont {Horodecki}}, \bibinfo {author} {\bibfnamefont {N.}~\bibnamefont {Ng}}, \bibinfo {author} {\bibfnamefont {J.}~\bibnamefont {Oppenheim}},\ and\ \bibinfo {author} {\bibfnamefont {S.}~\bibnamefont {Wehner}},\ }\bibfield  {title} {\bibinfo {title} {The second laws of quantum thermodynamics},\ }\href {https://doi.org/10.1073/pnas.1411728112} {\bibfield  {journal} {\bibinfo  {journal} {PNAS}\ }\textbf {\bibinfo {volume} {112}},\ \bibinfo {pages} {3275} (\bibinfo {year} {2015})}\BibitemShut {NoStop}%
\bibitem [{\citenamefont {Ćwikliński}\ \emph {et~al.}(2015)\citenamefont {Ćwikliński}, \citenamefont {Studziński}, \citenamefont {Horodecki},\ and\ \citenamefont {Oppenheim}}]{Ćwikliński2015PRL}%
  \BibitemOpen
  \bibfield  {author} {\bibinfo {author} {\bibfnamefont {P.}~\bibnamefont {Ćwikliński}}, \bibinfo {author} {\bibfnamefont {M.}~\bibnamefont {Studziński}}, \bibinfo {author} {\bibfnamefont {M.}~\bibnamefont {Horodecki}},\ and\ \bibinfo {author} {\bibfnamefont {J.}~\bibnamefont {Oppenheim}},\ }\bibfield  {title} {\bibinfo {title} {Limitations on the evolution of quantum coherences: Towards fully quantum second laws of thermodynamics},\ }\href {https://doi.org/10.1103/PhysRevLett.115.210403} {\bibfield  {journal} {\bibinfo  {journal} {Phys. Rev. Lett.}\ }\textbf {\bibinfo {volume} {115}},\ \bibinfo {pages} {210403} (\bibinfo {year} {2015})}\BibitemShut {NoStop}%
\bibitem [{\citenamefont {Gour}\ \emph {et~al.}(2018)\citenamefont {Gour}, \citenamefont {Jennings}, \citenamefont {Buscemi}, \citenamefont {Duan},\ and\ \citenamefont {Marvian}}]{Gour2018NC}%
  \BibitemOpen
  \bibfield  {author} {\bibinfo {author} {\bibfnamefont {G.}~\bibnamefont {Gour}}, \bibinfo {author} {\bibfnamefont {D.}~\bibnamefont {Jennings}}, \bibinfo {author} {\bibfnamefont {F.}~\bibnamefont {Buscemi}}, \bibinfo {author} {\bibfnamefont {R.}~\bibnamefont {Duan}},\ and\ \bibinfo {author} {\bibfnamefont {I.}~\bibnamefont {Marvian}},\ }\bibfield  {title} {\bibinfo {title} {Quantum majorization and a complete set of entropic conditions for quantum thermodynamics},\ }\href {https://doi.org/10.1038/s41467-018-06261-7} {\bibfield  {journal} {\bibinfo  {journal} {Nat. Commun.}\ }\textbf {\bibinfo {volume} {9}},\ \bibinfo {pages} {5352} (\bibinfo {year} {2018})}\BibitemShut {NoStop}%
\bibitem [{\citenamefont {Theurer}\ \emph {et~al.}(2023)\citenamefont {Theurer}, \citenamefont {Zanoni}, \citenamefont {Maria~Scandolo},\ and\ \citenamefont {Gour}}]{Theurer2023NJP}%
  \BibitemOpen
  \bibfield  {author} {\bibinfo {author} {\bibfnamefont {T.}~\bibnamefont {Theurer}}, \bibinfo {author} {\bibfnamefont {E.}~\bibnamefont {Zanoni}}, \bibinfo {author} {\bibfnamefont {C.}~\bibnamefont {Maria~Scandolo}},\ and\ \bibinfo {author} {\bibfnamefont {G.}~\bibnamefont {Gour}},\ }\bibfield  {title} {\bibinfo {title} {Thermodynamic state convertibility is determined by qubit cooling and heating},\ }\href {https://doi.org/10.1088/1367-2630/ad0d38} {\bibfield  {journal} {\bibinfo  {journal} {New J. Phys.}\ }\textbf {\bibinfo {volume} {25}},\ \bibinfo {pages} {123017} (\bibinfo {year} {2023})}\BibitemShut {NoStop}%
\bibitem [{\citenamefont {Gour}(2022)}]{Gour2022PRXQ}%
  \BibitemOpen
  \bibfield  {author} {\bibinfo {author} {\bibfnamefont {G.}~\bibnamefont {Gour}},\ }\bibfield  {title} {\bibinfo {title} {Role of quantum coherence in thermodynamics},\ }\href {https://doi.org/10.1103/PRXQuantum.3.040323} {\bibfield  {journal} {\bibinfo  {journal} {PRX Quantum}\ }\textbf {\bibinfo {volume} {3}},\ \bibinfo {pages} {040323} (\bibinfo {year} {2022})}\BibitemShut {NoStop}%
\bibitem [{\citenamefont {Brand\~ao}\ \emph {et~al.}(2013)\citenamefont {Brand\~ao}, \citenamefont {Horodecki}, \citenamefont {Oppenheim}, \citenamefont {Renes},\ and\ \citenamefont {Spekkens}}]{Brandao2013PRL}%
  \BibitemOpen
  \bibfield  {author} {\bibinfo {author} {\bibfnamefont {F.~G. S.~L.}\ \bibnamefont {Brand\~ao}}, \bibinfo {author} {\bibfnamefont {M.}~\bibnamefont {Horodecki}}, \bibinfo {author} {\bibfnamefont {J.}~\bibnamefont {Oppenheim}}, \bibinfo {author} {\bibfnamefont {J.~M.}\ \bibnamefont {Renes}},\ and\ \bibinfo {author} {\bibfnamefont {R.~W.}\ \bibnamefont {Spekkens}},\ }\bibfield  {title} {\bibinfo {title} {Resource theory of quantum states out of thermal equilibrium},\ }\href {https://doi.org/10.1103/PhysRevLett.111.250404} {\bibfield  {journal} {\bibinfo  {journal} {Phys. Rev. Lett.}\ }\textbf {\bibinfo {volume} {111}},\ \bibinfo {pages} {250404} (\bibinfo {year} {2013})}\BibitemShut {NoStop}%
\bibitem [{\citenamefont {Umegaki}(1962)}]{Umegaki1962}%
  \BibitemOpen
  \bibfield  {author} {\bibinfo {author} {\bibfnamefont {H.}~\bibnamefont {Umegaki}},\ }\bibfield  {title} {\bibinfo {title} {Conditional expectation in an operator algebra, {IV} ({E}ntropy {A}nd {I}nformation)},\ }\href {https://www.jstage.jst.go.jp/article/kodaimath1949/14/2/14_2_59/_article/-char/ja/} {\bibfield  {journal} {\bibinfo  {journal} {Koudai Math. Semi. Rep.}\ }\textbf {\bibinfo {volume} {14}},\ \bibinfo {pages} {59} (\bibinfo {year} {1962})}\BibitemShut {NoStop}%
\bibitem [{\citenamefont {Oppenheim}\ \emph {et~al.}(2002)\citenamefont {Oppenheim}, \citenamefont {Horodecki}, \citenamefont {Horodecki},\ and\ \citenamefont {Horodecki}}]{Oppenheim2002PRL}%
  \BibitemOpen
  \bibfield  {author} {\bibinfo {author} {\bibfnamefont {J.}~\bibnamefont {Oppenheim}}, \bibinfo {author} {\bibfnamefont {M.}~\bibnamefont {Horodecki}}, \bibinfo {author} {\bibfnamefont {P.}~\bibnamefont {Horodecki}},\ and\ \bibinfo {author} {\bibfnamefont {R.}~\bibnamefont {Horodecki}},\ }\bibfield  {title} {\bibinfo {title} {Thermodynamical approach to quantifying quantum correlations},\ }\href {https://doi.org/10.1103/PhysRevLett.89.180402} {\bibfield  {journal} {\bibinfo  {journal} {Phys. Rev. Lett.}\ }\textbf {\bibinfo {volume} {89}},\ \bibinfo {pages} {180402} (\bibinfo {year} {2002})}\BibitemShut {NoStop}%
\bibitem [{\citenamefont {Vinjanampathy}\ and\ \citenamefont {Anders}(2016)}]{Vinjanampathy2016CP}%
  \BibitemOpen
  \bibfield  {author} {\bibinfo {author} {\bibfnamefont {S.}~\bibnamefont {Vinjanampathy}}\ and\ \bibinfo {author} {\bibfnamefont {J.}~\bibnamefont {Anders}},\ }\bibfield  {title} {\bibinfo {title} {Quantum thermodynamics},\ }\href {https://doi.org/10.1080/00107514.2016.1201896} {\bibfield  {journal} {\bibinfo  {journal} {Contemp. Phys.}\ }\textbf {\bibinfo {volume} {57}},\ \bibinfo {pages} {545} (\bibinfo {year} {2016})}\BibitemShut {NoStop}%
\bibitem [{\citenamefont {Ciampini}\ \emph {et~al.}(2017)\citenamefont {Ciampini}, \citenamefont {Mancino}, \citenamefont {Orieux}, \citenamefont {Vigliar}, \citenamefont {Mataloni}, \citenamefont {Paternostro},\ and\ \citenamefont {Barbieri}}]{Ciampini2017npjQI}%
  \BibitemOpen
  \bibfield  {author} {\bibinfo {author} {\bibfnamefont {M.~A.}\ \bibnamefont {Ciampini}}, \bibinfo {author} {\bibfnamefont {L.}~\bibnamefont {Mancino}}, \bibinfo {author} {\bibfnamefont {A.}~\bibnamefont {Orieux}}, \bibinfo {author} {\bibfnamefont {C.}~\bibnamefont {Vigliar}}, \bibinfo {author} {\bibfnamefont {P.}~\bibnamefont {Mataloni}}, \bibinfo {author} {\bibfnamefont {M.}~\bibnamefont {Paternostro}},\ and\ \bibinfo {author} {\bibfnamefont {M.}~\bibnamefont {Barbieri}},\ }\bibfield  {title} {\bibinfo {title} {Experimental extractable work-based multipartite separability criteria},\ }\href {https://doi.org/10.1038/s41534-017-0011-9} {\bibfield  {journal} {\bibinfo  {journal} {npj Quantum Inf.}\ }\textbf {\bibinfo {volume} {3}},\ \bibinfo {pages} {10} (\bibinfo {year} {2017})}\BibitemShut {NoStop}%
\bibitem [{\citenamefont {Francica}\ \emph {et~al.}(2017)\citenamefont {Francica}, \citenamefont {Goold}, \citenamefont {Plastina},\ and\ \citenamefont {Paternostro}}]{Francica2017npjQI}%
  \BibitemOpen
  \bibfield  {author} {\bibinfo {author} {\bibfnamefont {G.}~\bibnamefont {Francica}}, \bibinfo {author} {\bibfnamefont {J.}~\bibnamefont {Goold}}, \bibinfo {author} {\bibfnamefont {F.}~\bibnamefont {Plastina}},\ and\ \bibinfo {author} {\bibfnamefont {M.}~\bibnamefont {Paternostro}},\ }\bibfield  {title} {\bibinfo {title} {Daemonic ergotropy: enhanced work extraction from quantum correlations},\ }\href {https://doi.org/10.1038/s41534-017-0012-8} {\bibfield  {journal} {\bibinfo  {journal} {npj Quantum Inf.}\ }\textbf {\bibinfo {volume} {3}},\ \bibinfo {pages} {12} (\bibinfo {year} {2017})}\BibitemShut {NoStop}%
\bibitem [{\citenamefont {Andolina}\ \emph {et~al.}(2019)\citenamefont {Andolina}, \citenamefont {Keck}, \citenamefont {Mari}, \citenamefont {Campisi}, \citenamefont {Giovannetti},\ and\ \citenamefont {Polini}}]{Andolina2019PRL}%
  \BibitemOpen
  \bibfield  {author} {\bibinfo {author} {\bibfnamefont {G.~M.}\ \bibnamefont {Andolina}}, \bibinfo {author} {\bibfnamefont {M.}~\bibnamefont {Keck}}, \bibinfo {author} {\bibfnamefont {A.}~\bibnamefont {Mari}}, \bibinfo {author} {\bibfnamefont {M.}~\bibnamefont {Campisi}}, \bibinfo {author} {\bibfnamefont {V.}~\bibnamefont {Giovannetti}},\ and\ \bibinfo {author} {\bibfnamefont {M.}~\bibnamefont {Polini}},\ }\bibfield  {title} {\bibinfo {title} {Extractable work, the role of correlations, and asymptotic freedom in quantum batteries},\ }\href {https://doi.org/10.1103/PhysRevLett.122.047702} {\bibfield  {journal} {\bibinfo  {journal} {Phys. Rev. Lett.}\ }\textbf {\bibinfo {volume} {122}},\ \bibinfo {pages} {047702} (\bibinfo {year} {2019})}\BibitemShut {NoStop}%
\bibitem [{\citenamefont {Monsel}\ \emph {et~al.}(2020)\citenamefont {Monsel}, \citenamefont {Fellous-Asiani}, \citenamefont {Huard},\ and\ \citenamefont {Auff\`eves}}]{Monsel2020PRL}%
  \BibitemOpen
  \bibfield  {author} {\bibinfo {author} {\bibfnamefont {J.}~\bibnamefont {Monsel}}, \bibinfo {author} {\bibfnamefont {M.}~\bibnamefont {Fellous-Asiani}}, \bibinfo {author} {\bibfnamefont {B.}~\bibnamefont {Huard}},\ and\ \bibinfo {author} {\bibfnamefont {A.}~\bibnamefont {Auff\`eves}},\ }\bibfield  {title} {\bibinfo {title} {The energetic cost of work extraction},\ }\href {https://doi.org/10.1103/PhysRevLett.124.130601} {\bibfield  {journal} {\bibinfo  {journal} {Phys. Rev. Lett.}\ }\textbf {\bibinfo {volume} {124}},\ \bibinfo {pages} {130601} (\bibinfo {year} {2020})}\BibitemShut {NoStop}%
\bibitem [{\citenamefont {Opatrn\'y}\ \emph {et~al.}(2021)\citenamefont {Opatrn\'y}, \citenamefont {Misra},\ and\ \citenamefont {Kurizki}}]{Opatrny2021PRL}%
  \BibitemOpen
  \bibfield  {author} {\bibinfo {author} {\bibfnamefont {T.}~\bibnamefont {Opatrn\'y}}, \bibinfo {author} {\bibfnamefont {A.}~\bibnamefont {Misra}},\ and\ \bibinfo {author} {\bibfnamefont {G.}~\bibnamefont {Kurizki}},\ }\bibfield  {title} {\bibinfo {title} {Work generation from thermal noise by quantum phase-sensitive observation},\ }\href {https://doi.org/10.1103/PhysRevLett.127.040602} {\bibfield  {journal} {\bibinfo  {journal} {Phys. Rev. Lett.}\ }\textbf {\bibinfo {volume} {127}},\ \bibinfo {pages} {040602} (\bibinfo {year} {2021})}\BibitemShut {NoStop}%
\bibitem [{\citenamefont {Yang}\ \emph {et~al.}(2023)\citenamefont {Yang}, \citenamefont {Yang}, \citenamefont {Alimuddin}, \citenamefont {Salvia}, \citenamefont {Fei}, \citenamefont {Zhao}, \citenamefont {Nimmrichter},\ and\ \citenamefont {Luo}}]{Yang2023PRL}%
  \BibitemOpen
  \bibfield  {author} {\bibinfo {author} {\bibfnamefont {X.}~\bibnamefont {Yang}}, \bibinfo {author} {\bibfnamefont {Y.-H.}\ \bibnamefont {Yang}}, \bibinfo {author} {\bibfnamefont {M.}~\bibnamefont {Alimuddin}}, \bibinfo {author} {\bibfnamefont {R.}~\bibnamefont {Salvia}}, \bibinfo {author} {\bibfnamefont {S.-M.}\ \bibnamefont {Fei}}, \bibinfo {author} {\bibfnamefont {L.-M.}\ \bibnamefont {Zhao}}, \bibinfo {author} {\bibfnamefont {S.}~\bibnamefont {Nimmrichter}},\ and\ \bibinfo {author} {\bibfnamefont {M.-X.}\ \bibnamefont {Luo}},\ }\bibfield  {title} {\bibinfo {title} {Battery capacity of energy-storing quantum systems},\ }\href {https://doi.org/10.1103/PhysRevLett.131.030402} {\bibfield  {journal} {\bibinfo  {journal} {Phys. Rev. Lett.}\ }\textbf {\bibinfo {volume} {131}},\ \bibinfo {pages} {030402} (\bibinfo {year} {2023})}\BibitemShut {NoStop}%
\bibitem [{\citenamefont {Hardy}\ \emph {et~al.}(1988)\citenamefont {Hardy}, \citenamefont {Littlewood},\ and\ \citenamefont {Pólya}}]{Hardy-Littlewood-Pólya}%
  \BibitemOpen
  \bibfield  {author} {\bibinfo {author} {\bibfnamefont {G.~H.}\ \bibnamefont {Hardy}}, \bibinfo {author} {\bibfnamefont {J.~E.}\ \bibnamefont {Littlewood}},\ and\ \bibinfo {author} {\bibfnamefont {G.}~\bibnamefont {Pólya}},\ }\href@noop {} {\emph {\bibinfo {title} {Inequalities}}},\ \bibinfo {edition} {2nd}\ ed.\ (\bibinfo  {publisher} {Cambridge University Press},\ \bibinfo {year} {1988})\BibitemShut {NoStop}%
\bibitem [{Note2()}]{Note2}%
  \BibitemOpen
  \bibinfo {note} {{\protect \color {black}To see this, it suffices to combine Lemmas 6, 10, 16 and Definition 15 in Ref.~\cite {Purity-review}.}}\BibitemShut {Stop}%
\bibitem [{\citenamefont {Sion}(1958)}]{Sion1958}%
  \BibitemOpen
  \bibfield  {author} {\bibinfo {author} {\bibfnamefont {M.}~\bibnamefont {Sion}},\ }\bibfield  {title} {\bibinfo {title} {On general minimax theorems},\ }\href {https://doi.org/10.2140/pjm.1958.8.171} {\bibfield  {journal} {\bibinfo  {journal} {Pac. J. Math.}\ }\textbf {\bibinfo {volume} {8}},\ \bibinfo {pages} {171} (\bibinfo {year} {1958})}\BibitemShut {NoStop}%
\bibitem [{\citenamefont {Nielsen}(1999)}]{Nielsen1999PRL}%
  \BibitemOpen
  \bibfield  {author} {\bibinfo {author} {\bibfnamefont {M.~A.}\ \bibnamefont {Nielsen}},\ }\bibfield  {title} {\bibinfo {title} {Conditions for a class of entanglement transformations},\ }\href {https://doi.org/10.1103/PhysRevLett.83.436} {\bibfield  {journal} {\bibinfo  {journal} {Phys. Rev. Lett.}\ }\textbf {\bibinfo {volume} {83}},\ \bibinfo {pages} {436} (\bibinfo {year} {1999})}\BibitemShut {NoStop}%
\bibitem [{\citenamefont {Allahverdyan}\ \emph {et~al.}(2004)\citenamefont {Allahverdyan}, \citenamefont {Balian},\ and\ \citenamefont {Nieuwenhuizen}}]{Allahverdyan2004EPL}%
  \BibitemOpen
  \bibfield  {author} {\bibinfo {author} {\bibfnamefont {A.~E.}\ \bibnamefont {Allahverdyan}}, \bibinfo {author} {\bibfnamefont {R.}~\bibnamefont {Balian}},\ and\ \bibinfo {author} {\bibfnamefont {T.~M.}\ \bibnamefont {Nieuwenhuizen}},\ }\bibfield  {title} {\bibinfo {title} {Maximal work extraction from finite quantum systems},\ }\href {https://doi.org/10.1209/epl/i2004-10101-2} {\bibfield  {journal} {\bibinfo  {journal} {Europhys. Lett.}\ }\textbf {\bibinfo {volume} {67}},\ \bibinfo {pages} {565} (\bibinfo {year} {2004})}\BibitemShut {NoStop}%
\bibitem [{\citenamefont {Tirone}\ \emph {et~al.}(2024)\citenamefont {Tirone}, \citenamefont {Salvia}, \citenamefont {Chessa},\ and\ \citenamefont {Giovannetti}}]{Tirone2024SciPostPhys}%
  \BibitemOpen
  \bibfield  {author} {\bibinfo {author} {\bibfnamefont {S.}~\bibnamefont {Tirone}}, \bibinfo {author} {\bibfnamefont {R.}~\bibnamefont {Salvia}}, \bibinfo {author} {\bibfnamefont {S.}~\bibnamefont {Chessa}},\ and\ \bibinfo {author} {\bibfnamefont {V.}~\bibnamefont {Giovannetti}},\ }\bibfield  {title} {\bibinfo {title} {{Quantum work capacitances: Ultimate limits for energy extraction on noisy quantum batteries}},\ }\href {https://doi.org/10.21468/SciPostPhys.17.2.041} {\bibfield  {journal} {\bibinfo  {journal} {SciPost Phys.}\ }\textbf {\bibinfo {volume} {17}},\ \bibinfo {pages} {041} (\bibinfo {year} {2024})}\BibitemShut {NoStop}%
\bibitem [{\citenamefont {Watrous}(2018)}]{Watrous-book}%
  \BibitemOpen
  \bibfield  {author} {\bibinfo {author} {\bibfnamefont {J.}~\bibnamefont {Watrous}},\ }\href {https://doi.org/10.1017/9781316848142} {\emph {\bibinfo {title} {The Theory of Quantum Information}}}\ (\bibinfo  {publisher} {Cambridge University Press},\ \bibinfo {year} {2018})\BibitemShut {NoStop}%
\bibitem [{\citenamefont {Regula}\ \emph {et~al.}(2021)\citenamefont {Regula}, \citenamefont {Takagi},\ and\ \citenamefont {Gu}}]{Regula2021Quantum}%
  \BibitemOpen
  \bibfield  {author} {\bibinfo {author} {\bibfnamefont {B.}~\bibnamefont {Regula}}, \bibinfo {author} {\bibfnamefont {R.}~\bibnamefont {Takagi}},\ and\ \bibinfo {author} {\bibfnamefont {M.}~\bibnamefont {Gu}},\ }\bibfield  {title} {\bibinfo {title} {Operational applications of the diamond norm and related measures in quantifying the non-physicality of quantum maps},\ }\href {https://doi.org/10.22331/q-2021-08-09-522} {\bibfield  {journal} {\bibinfo  {journal} {{Quantum}}\ }\textbf {\bibinfo {volume} {5}},\ \bibinfo {pages} {522} (\bibinfo {year} {2021})}\BibitemShut {NoStop}%
\end{thebibliography}%

\end{document}